\theoremstyle{plain}
\newtheorem{theorem}{Theorem}
\newtheorem{lemma}[theorem]{Lemma}
\newtheorem{corollary}[theorem]{Corollary}
\theoremstyle{definition}
\newtheorem{definition}{Definition}
\newtheorem{example}{Example}
\newenvironment{myindentpar}[1]%
  {
    \begin{list}{}%
         {
	  \setlength{\leftmargin}{#1}}%
          \item[]%
	 }
    {\end{list}
  }
\title{Conditional entropy of ordinal patterns}
\author[1,2]{Anton M.~Unakafov\thanks{Corresponding address: Institute of Mathematics, University of L\"ubeck,
Ratzeburger Alley 160, Building 64, 23562 L\"ubeck, Germany. Tel.: +49 451 500 4183; fax: +49 451 500 3373. e-mail: anton@math.uni-luebeck.de (A.M. Unakafov)}}
\author[1]{Karsten Keller}
\affil[1]{Institute of Mathematics, University of L\"ubeck}
\affil[2]{Graduate School for Computing in Medicine and Life Sciences, University of L\"ubeck}
\date{November 11, 2013}
\begin{document}
\maketitle

\begin{abstract}
\noindent In this paper we investigate a quantity called conditional entropy of ordinal patterns, akin to the permutation entropy.
The conditional entropy of ordinal patterns describes the average diversity of the ordinal patterns succeeding a given ordinal pattern.
We observe that this quantity provides a good estimation of the Kolmogorov-Sinai entropy in many cases.
In particular, the conditional entropy of ordinal patterns of a finite order coincides with the Kolmogorov-Sinai entropy for periodic dynamics and for Markov shifts over a binary alphabet.
Finally, the conditional entropy of ordinal patterns is computationally simple and thus can be well applied to real-world data.
\\

\noindent{\bf Keywords}: Conditional entropy; Ordinal pattern; Kolmogorov-Sinai entropy; Permutation entropy; Markov shift; Complexity.
\end{abstract}

\section{Introduction}\label{intro}
The question how can one quantify the complexity of a system often arises in various fields of research.
On the one hand, theoretical measures of complexity like the Kolmogorov-Sinai (KS) entropy \cite{Walters1982, Choe2005}, the Lyapunov exponent \cite{Choe2005} and others 
are not easy to estimate from given data.
On the other hand, empirical measures of complexity often lack of a theoretical foundation, 
see for instance the discussion of the renormalized entropy and its relationship to the Kullback-Leibler entropy in
\cite{SaparinWittKurthsAnishchenko1994, QuianQuirogaArnholdLehnertzGrassberger2000, KopitzkiWarnkeSaparinKurthsTimmer2002, QuianQuirogaArnholdLehnertzGrassberger2002}. 
Sometimes they are also not well interpretable, for example, see \cite{RichmanMoorman2000} for a criticism of the approximate entropy interpretability.

One of possible approaches to measuring complexity is based on ordinal pattern analysis \cite{BandtPompe2002, KellerSinnEmonds2007, Amigo2010}.
In particular, the permutation entropy of some order $d$ can easily be estimated from the data 
and has a theoretical counterpart (for order $d$ tending to infinity), which is a justified measure of complexity. 
However, in this paper we consider another ordinal-based quantity, the conditional entropy of ordinal patterns. 
We show that for a finite order $d$ in many cases it is closer to the KS entropy than the permutation entropy.

The idea behind ordinal pattern analysis is to consider order relations between values of time series instead of the values themselves.  
The original time series is converted to a sequence of ordinal patterns of an order $d$, 
each of them describing order relations between $(d+1)$ successive points of the time series, as demonstrated in Figure~\ref{figureOrdinalPattens} for order $d = 3$.

\begin{figure}[h]
      \centering
      \includegraphics[scale=0.137]{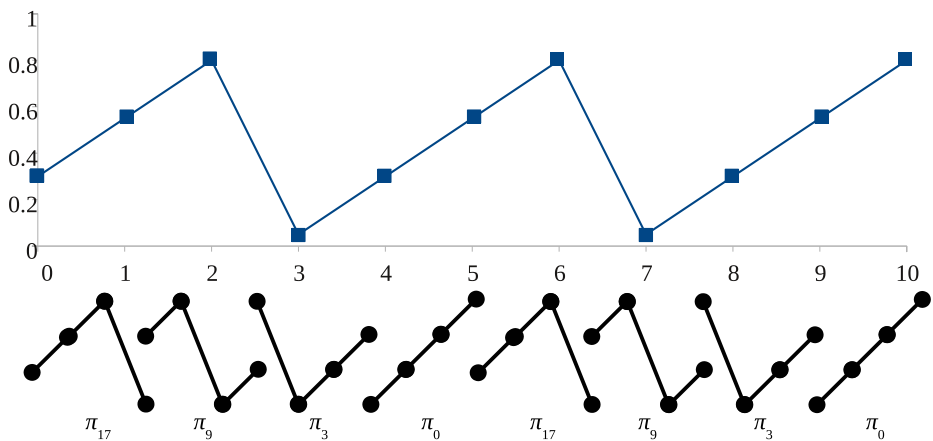}
      \caption{Ordinal patterns of order $d = 3$ for a periodic time series, four patterns occur with period $4$.}
      \label{figureOrdinalPattens}
\end{figure}
 
The more complex the underlying dynamical system is, the more diverse the ordinal patterns occurring for the time-series are.
This diversity is just what the permutation entropy measures.
For example, in Figure~\ref{figureOrdinalPattens} the permutation entropy of order $d=3$ is equal to $\frac{1}{3}\ln4$, 
since there are four different ordinal patterns occurring with the same frequency.
The permutation entropy is robust to noise \cite{Amigo2010}, computationally simple and fast \cite{KellerSinnEmonds2007}.
For order $d$ tending to infinity the permutation entropy is connected to the central theoretical measure of complexity for dynamical systems:
it is equal to the KS entropy in the important particular case \cite{BandtKellerPompe2002}, and it is not lower than the KS entropy in a more general case \cite{Keller2012}.

However, the permutation entropy of finite order $d$ does not estimate the KS entropy well, while being an interesting practical measure of complexity.
Even if the permutation entropy converges to the KS entropy as order $d$ tends to infinity, the permutation entropy of finite $d$ can be either much higher or much lower than the KS entropy 
(see Subsection~\ref{sectOrdinalKS} for details).

Therefore we propose to consider the conditional entropy of ordinal patterns of order $d$:
as we demonstrate, in many cases it provides a much better practical estimation of the KS entropy than the permutation entropy, while having the same computational efficiency.
The conditional entropy of ordinal patterns characterizes the average diversity of ordinal patterns succeeding a given one.
For the example in Figure~\ref{figureOrdinalPattens} the conditional entropy of ordinal patterns of order $d=3$ is equal to zero 
since for each ordinal pattern only one successive ordinal pattern occurs  
($\pi_9$ is the only successive ordinal pattern for $\pi_{17}$, $\pi_3$ is the only successive ordinal pattern for $\pi_{9}$ and so on).

Let us motivate the discussion of the conditional entropy of ordinal patterns by an example.
\begin{example}\label{LogisticFamilyEx}
  Consider the family of logistic maps $f_r:[0,1]\hookleftarrow$ defined by $f_r(x)=rx(1-x)$. 
  For almost all $r \in [0, 4]$ the KS entropy either coincides with the Lyapunov exponent if it is positive or is equal to zero otherwise
  (this holds by Pesin's formula \cite[Theorems~4,~6]{Young2003}, due to the properties of $f_r$-invariant measures \cite{MartensNowicki2000}).
  Note that the Lyapunov exponent for the logistic map can be estimated rather accurately \cite{Sprott2003}.
  For the logistic maps the permutation entropy of order $d$ converges to the KS entropy as $d$ tends to infinity.
  However, Figure~\ref{figureLogisticFamily} shows that for $r \in [3.5, 4]$ the permutation entropy of order $d = 9$ 
  is relatively far from the Lyapunov exponent in comparison with the conditional entropy of ordinal patterns of the same order
  (values of both entropies are numerically estimated from orbits of length $L = 4\cdot 10^6$ of a `random point' in $[0,1]$).
  \begin{figure}[h]
      \centering
      \includegraphics[scale=0.48]{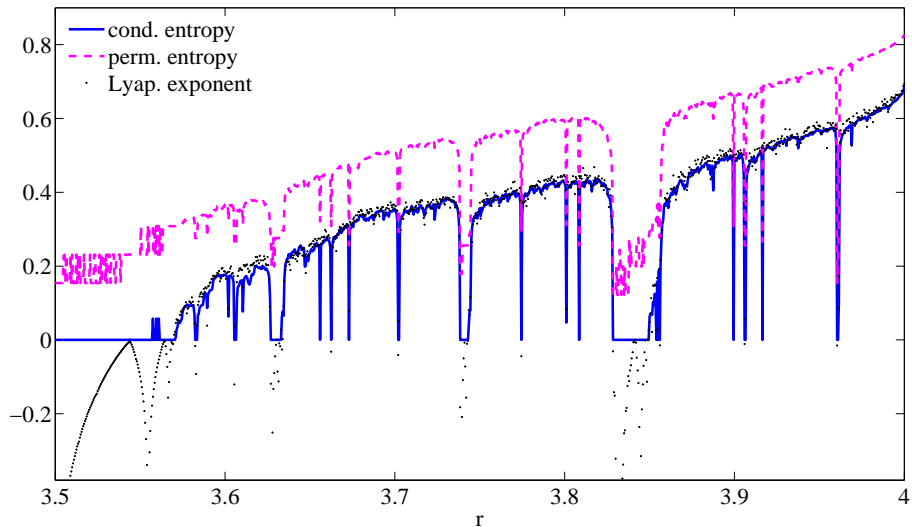}
      \caption{Empirical conditional entropy and permutation entropy in comparison with the Lyapunov exponent for logistic maps}
      \label{figureLogisticFamily}
  \end{figure}
\end {example}

In this paper we demonstrate that under certain assumptions the conditional entropy of ordinal patterns estimates the KS entropy better than the permutation entropy (Theorem~\ref{CEofOP_andOrdinalEntr}).
Besides, we prove that for some dynamical systems the conditional entropy of ordinal patterns for a finite order $d$ coincides with the KS entropy (Theorems~\ref{MarkovShift_cor},~\ref{AllPeriodicPointsTh}),
while the permutation entropy only asymptotically approaches the KS entropy.

The paper is organized as follows.
In Section~\ref{sectionPrelim} we fix the notation, recall the definition of the KS entropy and basic notions from ordinal pattern analysis.
In Section~\ref{sectionCEofOPintro} we introduce the conditional entropy of ordinal patterns and show that in some cases it approaches the KS entropy faster than the permutation entropy.
Moreover, we prove that the conditional entropy of ordinal patterns for finite order $d$ coincides with the KS entropy 
for Markov shifts over two symbols (Subsection~\ref{sectMarkov}) and for systems with periodic dynamics (Subsection~\ref{sectPeriodicCase}).
In Section~\ref{sectionCEofOP_PE_SE} we consider the interrelation between the conditional entropy of ordinal patterns, the permutation entropy and the sorting entropy \cite{BandtPompe2002}.
In Section~\ref{sectionConclusion} we observe some open question and make a conclusion.
Finally, in Section~\ref{sectionProof} we provide those proofs that are mainly technical.

\section{Preliminaries}\label{sectionPrelim}
\subsection{Kolmogorov-Sinai entropy}\label{sectSDandKS}
In this subsection we recall the definition of the KS entropy of a dynamical system and define some related notions we will use further on.
Throughout the paper we use the same notation as in \cite{KellerUnakafovUnakafova2012} and refer to this paper for a brief introduction.
For a general discussion and details we refer the reader to \cite{Choe2005, Kitchens1998}.

We focus on a measure-preserving dynamical system $\left( \Omega, \mathbb{B}(\Omega), \mu, T \right)$, where
$\Omega$ is a non-empty topological space,
$\mathbb{B}(\Omega)$ is the Borel sigma-algebra on it,
$\mu:\mathbb{B}(\Omega) \rightarrow [0,1]$ is a probability measure,
and $T: \Omega \hookleftarrow$ is a $\mathbb{B}(\Omega)$-$\mathbb{B}(\Omega)$-measurable $\mu$-preserving map,
i.e.  $\mu(T^{-1}(B)) = \mu(B)$  for all $B \in \mathbb{B}(\Omega)$.

The complexity of a system can be measured by considering a coarse-grained description of it provided by symbolic dynamics.
Given a finite partition ${\mathcal P} = \{P_0, P_1, \ldots, P_l\} \subset \mathbb{B}(\Omega)$ of $\Omega$ 
(below we consider only partitions ${\mathcal P} \subset \mathbb{B}(\Omega)$ without mentioning this explicitly), 
one assigns to each set $P_a \in {\mathcal P}$ the symbol $a$ from the alphabet $A=\{0, 1, \ldots, l\}$.
Similarly, the $n$-letter word $a_{0} a_{1} \ldots a_{n-1}$ is associated with the set $P_{a_{0} a_{1} \ldots a_{n-1}}$ defined by
\begin {equation}\label{RefinedPartitionElement_eq}
    P_{a_{0} a_{1} \ldots a_{n-1}} = P_{a_{0}} \cap T^{-1}(P_{a_{1}}) \cap \ldots \cap T^{-\circ(n-1)}(P_{a_{n-1}}).
\end {equation}
Then the collection
\begin {equation}\label{n_partition}
   {\mathcal P}_n = \{P_{a_{0}a_{1} \ldots a_{n-1}} \,\mid\, a_{0}, a_{1}, \ldots, a_{n-1} \in A\}
\end {equation}
forms a partition of $\Omega$ as well.
The {\it Shannon entropy}, the {\it entropy rate} and the {\it Kolmogorov-Sinai (KS) entropy} are respectively defined by (we use the convention that $0\ln 0:=0$)
\begin {gather*}
    H({\mathcal P}) = - \sum_{P \in {\mathcal P}} \mu(P) \ln \mu(P),\\
    h_\mu(T, {\mathcal P}) = \lim_{n \to \infty} \mleft( H({\mathcal P}_{n+1}) - H({\mathcal P}_n) \mright) = \lim_{n \to \infty} \frac{H({\mathcal P}_n)}{n},\\
    h_\mu(T) = \sup_{{\mathcal P}\text{ finite partition}}  h_\mu(T, {\mathcal P}).
\end {gather*}
The latter quantity provides a theoretical measure of complexity for a dynamical system.
In general, the determination of the KS entropy is complicated, 
thus the estimation of the KS entropy (from real-world data as well) is of interest.

\subsection{Ordinal patterns, permutation entropy and sorting entropy}\label{sectOrdinal}
Let us first recall the definitions of ordinal patterns and ordinal partitions.
For $d \in \mathbb{N}$ denote the set of permutations of $\lbrace 0, 1, \ldots , d\rbrace$ by $\Pi_d$.
\begin {definition}\label{OrdPatternDef}
    We say that a real vector $(x_0, x_1, \ldots, x_d)\in {\mathbb R}^{d+1}$ has the {\it ordinal pattern $\pi = (r_0, r_1,\ldots, r_d) \in \Pi_d$ } of order $d$ if
    \begin {equation*}
	x_{r_0} \geq x_{r_1} \geq \ldots \geq x_{r_d}
    \end {equation*}
    and
    \begin {equation*}
	r_{l-1} > r_{l} \text{ for } x_{r_{l-1}} = x_{r_{l}}
	\text{.}
     \end {equation*}
\end {definition}
\begin{definition}\label{OrdPartitionDef}
  For $N\in {\mathbb N}$, let ${\mathbf X}=(X_1,X_2,\ldots ,X_N)$ be an ${\mathbb R}$-valued random vector on $(\Omega,{\mathbb B}(\Omega))$.
  Then for $d\in {\mathbb N}$ the partition
  \begin{align*}
	{\mathcal P}^{\mathbf X}(d)=\{P_{(\pi_1,\pi_2,\ldots ,\pi_N)}\,\mid\,\pi_i\in\Pi_d\text{ for }i=1,2,\ldots ,N\}
  \end{align*}
  \hspace*{2.2mm}with
  \begin{align*}
      \hspace{2.2mm}
      P_{(\pi_1,\pi_2,\ldots ,\pi_N)}=\{\omega\in\Omega\,\mid\,(X_i(T^{\circ d}(\omega)),X_i(T^{\circ {d-1}}(\omega)),\ldots ,X_i(T(\omega)),X_i(\omega))\hspace*{4mm}\\
      \text{has the ordinal pattern }\pi_i\text{ for }i=1,2,\ldots ,N\}\hspace{4mm}
  \end{align*}
  is called the {\it ordinal partition of order $d$} with respect to $T$ and ${\mathbf X}$.
\end{definition}

The {\it permutation entropy of order $d$ (with respect to ${\mathbf X}$)} and the {\it sorting entropy of order $d$ (with respect to ${\mathbf X}$)}, 
being ordinal-based complexity measures for time series, are respectively given by
\begin {equation*}
     h_\mu ^{\mathbf X}(T, d) = \frac{1}{d}H({\mathcal P}^{\mathbf X}(d)),
\end {equation*}
\begin {equation*}
     h_{\mu, \triangle}^{\mathbf X}(T, d) = H({\mathcal P}^{\mathbf X}(d+1)) - H({\mathcal P}^{\mathbf X}(d))
\end {equation*}
(note that the original definitions in \cite{BandtPompe2002} were given for the case $\Omega \subseteq {\mathbb R}$ and ${\mathbf X} = \mathrm{id}$, where $\mathrm{id}$ is the identity map).
The permutation entropy is often defined just as $H({\mathcal P}^{\mathbf X}(d))$, but for us it is more convenient to use the definition above.
The sorting entropy represents the increase of diversity of ordinal patterns as the order $d$ increases by one.
To see the physical meaning of the permutation entropy let us rewrite it in the explicit form.
Given $\Pi_d^N = \{{\boldsymbol \pi} = (\pi_1,\pi_2,\ldots,\pi_N) \mid \pi_1,\pi_2,\ldots ,\pi_N \in \Pi_d\}$, we have
\begin {equation*} 
    h_{\mu}^{\mathbf X}(T, d) = - \frac{1}{d}\sum_{{\boldsymbol \pi} \in \Pi_d^N} \mu(P_{\boldsymbol \pi}) \ln \mu(P_{\boldsymbol \pi}),
\end {equation*}
that is the permutation entropy characterizes the diversity of ordinal patterns ${\boldsymbol \pi}$ divided by the order $d$.

In applications permutation and sorting entropy of order $d$ can be estimated from a finite orbit of a dynamical system with certain properties.
Simple and natural estimators are the empirical permutation entropy and the empirical sorting entropy, respectively. 
They are based on estimating $\mu(P_{(\pi_1,\pi_2,\ldots ,\pi_N)})$ by the empirical probabilities of observing $(\pi_1,\pi_2,\ldots ,\pi_N)$ in the time series generated by ${\mathbf X}$.

Finally, recall that the permutation and sorting entropy are related to the KS entropy.
For the case $\Omega \subseteq {\mathbb R}$, $T$ being a piecewise strictly-monotone interval map and ${\mathbf X} = \mathrm{id}$, Bandt et al. \cite{BandtKellerPompe2002} proved that:
\begin {equation*}
    h_\mu(T) = \lim_{d\to\infty} \frac{1}{d}H({\mathcal P}^\mathrm{id}(d)).
\end {equation*}
Keller and Sinn \cite{KellerSinn2009,KellerSinn2010, Keller2012} showed that in many cases (see \cite{AntonioukKellerMaksymenko2013} for recent results) it holds:
\begin {equation}\label{ordinalRepresentationKS}
    h_\mu(T) =\! \lim_{d\to\infty}h_\mu(T, {\mathcal P}^{\mathbf X}(d)) = \lim_{d\to\infty}\lim_{n\to\infty} \mleft(H({\mathcal P}^{\mathbf X}(d)_{n+1}) - H({\mathcal P}^{\mathbf X}(d)_n)\mright).
\end {equation}
Note that if \eqref{ordinalRepresentationKS} holds, then the permutation entropy and the sorting entropy for $d$ tending to infinity provide upper bounds for the KS entropy \cite{Keller2012}: 
\begin {equation*}
    \varlimsup_{d\to\infty} h_\mu ^{\mathbf X}(T, d) \geq h_\mu(T),\;\;\;\; \varlimsup_{d\to\infty} h_{\mu, \triangle}^{\mathbf X}(T, d) \geq h_\mu(T).
\end {equation*}
One may ask whether it is possible to get a better ordinal-based estimator of the KS entropy using the representation \eqref{ordinalRepresentationKS}.
This question is discussed in the next section.

\section{Conditional entropy of ordinal patterns and its relation to the Kolmogorov-Sinai entropy} \label{sectionCEofOPintro}
The {\it conditional entropy of ordinal patterns of order $d$} is defined by
\begin {equation}\label{CEofOPdefinition}
  h_{\mu, \text{cond}}^{\mathbf X}(T, d) = H({\mathcal P}^{\mathbf X}(d)_2) - H({\mathcal P}^{\mathbf X}(d)).
\end {equation}
It is the first element of the sequence 
\begin {equation*}
  \mleft( \mleft(H({\mathcal P}^{\mathbf X}(d)_{n+1}) - H({\mathcal P}^{\mathbf X}(d)_n)\mright) \mright)_{n \in {\mathbb N}}, 
\end {equation*}
which provides the ordinal representation \eqref{ordinalRepresentationKS} of the KS entropy as both $n$ and $d$ tend to infinity.
For brevity we refer to $h_{\mu, \text{cond}}^{\mathbf X}(T, d)$ as the `conditional entropy' when no confusion can arise.

To see the physical meaning of the conditional entropy recall that the entropies of the partitions ${\mathcal P}^{\mathbf X}(d)$ and ${\mathcal P}^{\mathbf X}(d)_2$ are given by
\begin {equation*}
      H({\mathcal P}^{\mathbf X}(d))   = -\sum_{{\boldsymbol \pi} \in \Pi_d^N} \mu(P_{\boldsymbol \pi}) \ln \mu(P_{\boldsymbol \pi}),
\end {equation*}
\begin {equation*}
      H({\mathcal P}^{\mathbf X}(d)_2) = -\sum_{{\boldsymbol \pi} \in \Pi_d^N} \sum_{{\boldsymbol \xi} \in \Pi_d^N} \mu(P_{\boldsymbol \pi} \cap T^{-1}(P_{\boldsymbol \xi})) 
													    \ln \mu(P_{\boldsymbol \pi} \cap T^{-1}(P_{\boldsymbol \xi})).
\end {equation*}
Then we can rewrite the conditional entropy \eqref{CEofOPdefinition} as
\begin {equation*} 
    h_{\mu, \text{cond}}^{\mathbf X}(T, d) = -\sum_{{\boldsymbol \pi} \in \Pi_d^N} \sum_{{\boldsymbol \xi} \in \Pi_d^N} \mu(P_{\boldsymbol \pi} \cap T^{-1}(P_{\boldsymbol \xi})) 
													      \ln \frac{\mu(P_{\boldsymbol \pi} \cap T^{-1}(P_{\boldsymbol \xi}))}{\mu(P_{\boldsymbol \pi})}.
\end {equation*}
If $\omega \in P_{\boldsymbol \pi} \cap T^{-1}(P_{\boldsymbol \xi})$ for some $P_{\boldsymbol \pi}, P_{\boldsymbol \xi} \in {\mathcal P}^{\mathbf X}(d)$ with 
${\boldsymbol \pi} = (\pi_1,\pi_2,\ldots,\pi_N)$ and ${\boldsymbol \xi} = (\xi_1,\xi_2,\ldots,\xi_N)$, 
then we say that in $\omega$ the ordinal patterns $\xi_1,\xi_2,\ldots,\xi_N$ are {\it successors} of the ordinal patterns $\pi_1,\pi_2,\ldots,\pi_N$, respectively.
The conditional entropy characterizes the diversity of successors of given ordinal patterns ${\boldsymbol \pi}$, 
whereas the permutation entropy characterizes the diversity of ordinal patterns ${\boldsymbol \pi}$ themselves.

In the rest of the section we discuss the relationship between the conditional entropy of ordinal patterns and the KS entropy.

\subsection{Relationship in the general case}\label{sectCEofOPandKSgeneral}
Statements {\it(i)} and {\it(ii)} of the following theorem imply that under the given assumptions  
the conditional entropy of ordinal patterns bounds the KS entropy better than the sorting entropy and the permutation entropy, respectively.

\begin{theorem}\label{CEofOP_andOrdinalEntr}
   Let $\mleft( \Omega, \mathbb{B}(\Omega), \mu, T \mright)$ be a measure-preserving dynamical system, 
   ${\mathbf X}$ be a random vector on $(\Omega,{\mathbb B}(\Omega))$ such that \eqref{ordinalRepresentationKS} is satisfied. Then it holds
   \begin {flalign*}
	  &\text{{\it(i)}\hphantom{{\it(ii)}}} 
			h_{\mu}(T) \leq \varlimsup_{d \to \infty}h_{\mu, \text{cond}}^{\mathbf X}(T, d) \leq \varlimsup_{d \to \infty}h_{\mu, \triangle}^{\mathbf X}(T, d).&\\
	  &\text{{\it(ii)}\hphantom{{\it(i)}}} 
			 \text{Moreover, if for some } d_0 \in {\mathbb N} \text{ it holds}&
   \end {flalign*}
   \begin {equation}\label{PEdecreases}
	  h_\mu ^{\mathbf X}(T, d) \geq h_\mu ^{\mathbf X}(T, d+1) \text{ for all } d \geq d_0,
   \end {equation}
   \begin {flalign*}
	  &\text{\hphantom{{\it(i)}}\hphantom{{\it(ii)}}or the limit of the sorting entropy}&
   \end{flalign*} 
   \begin {equation}\label{SElimitExists}
	  \lim_{d \to \infty} h_{\mu, \triangle}^{\mathbf X}(T, d)\text{ exists},
   \end {equation}
   \begin {flalign*}
	  &\text{\hphantom{{\it(i)}}\hphantom{{\it(ii)}}then it holds:}&
    \end{flalign*}
    \begin {equation}\label{EntropiesLimit_ineq}
          h_{\mu}(T) \leq \varlimsup_{d \to \infty}h_{\mu, \text{cond}}^{\mathbf X}(T, d) \leq \varlimsup_{d \to \infty}h_{\mu, \triangle}^{\mathbf X}(T, d) \leq 
											       \varlimsup_{d \to \infty}h_\mu^{\mathbf X}(T, d).
   \end {equation}
\end {theorem}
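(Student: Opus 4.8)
The plan is to reduce everything to the single non-negative sequence $a_d:=H(\mathcal{P}^{\mathbf X}(d))$, in terms of which the conditional entropy is $H(\mathcal{P}^{\mathbf X}(d)_2)-a_d$, the sorting entropy is $s_d:=a_{d+1}-a_d$, and the permutation entropy is $p_d:=a_d/d$. Part \emph{(i)} rests on two observations: that the conditional entropy is the very first term of the non-increasing sequence whose limit is the entropy rate $h_\mu(T,\mathcal{P}^{\mathbf X}(d))$, and that the order-$(d+1)$ ordinal partition refines $\mathcal{P}^{\mathbf X}(d)_2$. Part \emph{(ii)} rests on recognising $p_d$ as the sequence of Ces\`aro means of the sorting entropy.

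For the first inequality of \emph{(i)} I would recall the standard fact that for any fixed finite partition $\mathcal{Q}$ the sequence $\bigl(H(\mathcal{Q}_{n+1})-H(\mathcal{Q}_n)\bigr)_{n\in\mathbb N}$ is non-increasing and (by the definition given in Section~\ref{sectSDandKS}) converges to $h_\mu(T,\mathcal{Q})$. Applying this to $\mathcal{Q}=\mathcal{P}^{\mathbf X}(d)$ and keeping the $n=1$ term gives, for every $d$,
\begin{equation*}
  h_\mu(T,\mathcal{P}^{\mathbf X}(d)) \le H(\mathcal{P}^{\mathbf X}(d)_2)-H(\mathcal{P}^{\mathbf X}(d)) = h_{\mu,\text{cond}}^{\mathbf X}(T,d).
\end{equation*}
Since \eqref{ordinalRepresentationKS} gives $h_\mu(T)=\lim_{d\to\infty}h_\mu(T,\mathcal{P}^{\mathbf X}(d))$, passing to the limit on the left and to $\varlimsup$ on the right yields $h_\mu(T)\le\varlimsup_{d\to\infty}h_{\mu,\text{cond}}^{\mathbf X}(T,d)$.

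For the second inequality of \emph{(i)}, which I expect to be the main obstacle, I would establish the combinatorial refinement $\mathcal{P}^{\mathbf X}(d+1)\succeq\mathcal{P}^{\mathbf X}(d)_2$. Unwinding Definition~\ref{OrdPartitionDef}, the partition $\mathcal{P}^{\mathbf X}(d)_2$, being the common refinement of $\mathcal{P}^{\mathbf X}(d)$ and $T^{-1}(\mathcal{P}^{\mathbf X}(d))$, records for each component $X_i$ the order-$d$ pattern of the window $\bigl(X_i(T^{\circ d}(\omega)),\ldots,X_i(\omega)\bigr)$ together with that of the shifted window $\bigl(X_i(T^{\circ(d+1)}(\omega)),\ldots,X_i(T(\omega))\bigr)$, whereas $\mathcal{P}^{\mathbf X}(d+1)$ records the order-$(d+1)$ pattern of the full window $\bigl(X_i(T^{\circ(d+1)}(\omega)),\ldots,X_i(\omega)\bigr)$. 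The full ranking of these $d+2$ points determines the relative ranking of any $d+1$ of them, and the tie-breaking convention of Definition~\ref{OrdPatternDef} is preserved under restriction to a subwindow; hence for every $i$ both order-$d$ subpatterns are functions of the order-$(d+1)$ pattern, which is precisely the claimed refinement. Monotonicity of Shannon entropy under refinement then gives $H(\mathcal{P}^{\mathbf X}(d+1))\ge H(\mathcal{P}^{\mathbf X}(d)_2)$, and subtracting $H(\mathcal{P}^{\mathbf X}(d))$ from both sides yields $h_{\mu,\text{cond}}^{\mathbf X}(T,d)\le h_{\mu,\triangle}^{\mathbf X}(T,d)$ termwise, whence the $\varlimsup$ inequality.

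For \emph{(ii)} I would first note $a_0=H(\mathcal{P}^{\mathbf X}(0))=0$, since an ordinal pattern of order $0$ is trivial, so that $p_d=\frac1d\sum_{k=0}^{d-1}s_k$ exhibits the permutation entropy as the Ces\`aro mean of the sorting entropy. Under \eqref{SElimitExists}, Ces\`aro summation gives $\lim_{d\to\infty}p_d=\lim_{d\to\infty}s_d$, so $\varlimsup_{d\to\infty}h_{\mu,\triangle}^{\mathbf X}(T,d)=\varlimsup_{d\to\infty}h_\mu^{\mathbf X}(T,d)$ and the last inequality of \eqref{EntropiesLimit_ineq} holds with equality. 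Under \eqref{PEdecreases} the sequence $p_d$ is non-increasing for $d\ge d_0$ and non-negative, hence convergent, and the algebraic identity
\begin{equation*}
  s_d = a_{d+1}-a_d = (d+1)\,p_{d+1}-d\,p_d = d\,(p_{d+1}-p_d)+p_{d+1}\le p_{d+1}
\end{equation*}
(valid once $p_{d+1}\le p_d$) gives $\varlimsup_{d\to\infty}s_d\le\varlimsup_{d\to\infty}p_{d+1}=\varlimsup_{d\to\infty}p_d$, which is again the desired bound. Chaining this with the two inequalities of \emph{(i)} produces the full chain \eqref{EntropiesLimit_ineq}.
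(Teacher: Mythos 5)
Your proposal is correct and follows essentially the same route as the paper: part \emph{(i)} via the monotone decrease in $n$ of $H(\mathcal{P}^{\mathbf X}(d)_{n+1})-H(\mathcal{P}^{\mathbf X}(d)_n)$ together with $H(\mathcal{P}^{\mathbf X}(d)_2)\le H(\mathcal{P}^{\mathbf X}(d+1))$, and part \emph{(ii)} via the Ces\`aro-mean identification of the permutation entropy and the algebraic inequality $s_d\le p_{d+1}$ under \eqref{PEdecreases}, which is exactly the content of the paper's Lemma~\ref{SEandCEofOP}. The only difference is cosmetic: you prove the refinement $\mathcal{P}^{\mathbf X}(d+1)\succeq\mathcal{P}^{\mathbf X}(d)_2$ directly (correctly handling the tie-breaking convention), whereas the paper imports this inequality from the literature.
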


The proof is given in Subsection~\ref{CEofOP_andOrdinalEntr_ProofSect}.
Note that both statements of Theorem~\ref{CEofOP_andOrdinalEntr} remain correct if one replaces the upper limits by the lower limits.

As a consequence of Theorem~\ref{CEofOP_andOrdinalEntr} we get the following result.
\begin{corollary}\label{CEofOPandKS_PE}
    If the assumption \eqref{ordinalRepresentationKS} and either of the assumptions \eqref{PEdecreases} or \eqref{SElimitExists} are satisfied, then $h_{\mu}(T) = \varlimsup_{d \to \infty}h_\mu^{\mathbf X}(T, d)$ yields
    \begin {equation*}
          h_{\mu}(T) = \varlimsup_{d \to \infty}h_{\mu, \text{cond}}^{\mathbf X}(T, d).
    \end {equation*}
\end{corollary}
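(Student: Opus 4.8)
The plan is to obtain the corollary directly from the four-term chain \eqref{EntropiesLimit_ineq} furnished by Theorem~\ref{CEofOP_andOrdinalEntr}(ii), so the work is essentially bookkeeping rather than new analysis. First I would verify that the hypotheses line up: the corollary assumes \eqref{ordinalRepresentationKS} together with \emph{either} \eqref{PEdecreases} \emph{or} \eqref{SElimitExists}, which is exactly the hypothesis package under which part (ii) of the theorem yields
\[
    h_{\mu}(T) \leq \varlimsup_{d \to \infty}h_{\mu, \text{cond}}^{\mathbf X}(T, d) \leq \varlimsup_{d \to \infty}h_{\mu, \triangle}^{\mathbf X}(T, d) \leq \varlimsup_{d \to \infty}h_\mu^{\mathbf X}(T, d).
\]
Thus the full four-term inequality is available, and in particular the rightmost member is the permutation-entropy upper limit.

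Second, I would bring in the extra hypothesis of the corollary, $h_{\mu}(T) = \varlimsup_{d \to \infty}h_\mu^{\mathbf X}(T, d)$. This asserts that the leftmost and rightmost terms of the displayed chain coincide. A standard squeeze then collapses the chain: from $a \le b \le c \le e$ with $a = e$ one concludes $a = b = c = e$, so every intermediate quantity equals the common value $h_{\mu}(T)$. Reading off the second term gives $\varlimsup_{d \to \infty}h_{\mu, \text{cond}}^{\mathbf X}(T, d) = h_{\mu}(T)$, which is precisely the assertion.

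The argument is purely order-theoretic and presents no genuine obstacle, since all the analytic content is already packaged in Theorem~\ref{CEofOP_andOrdinalEntr}. The only subtlety worth checking is that one really needs part (ii) and not merely part (i): the three-term inequality of (i) does not involve $\varlimsup_{d \to \infty}h_\mu^{\mathbf X}(T, d)$, and it is exactly this permutation-entropy term that the corollary's hypothesis pins to $h_{\mu}(T)$ in order to trigger the squeeze. Hence invoking the auxiliary condition \eqref{PEdecreases} or \eqref{SElimitExists} is essential, and I would state this dependence explicitly when writing the proof.
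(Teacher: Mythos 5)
Your proposal is correct and matches the paper's intent exactly: the corollary is stated as an immediate consequence of Theorem~\ref{CEofOP_andOrdinalEntr}(ii), and the squeeze on the four-term chain \eqref{EntropiesLimit_ineq} under the hypothesis $h_{\mu}(T) = \varlimsup_{d \to \infty}h_\mu^{\mathbf X}(T, d)$ is precisely the intended argument. Your remark that part (i) alone would not suffice, since it lacks the permutation-entropy term, is a correct and worthwhile observation.
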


This sheds some light on the behavior of the conditional entropy for the logistic maps, described in Example~\ref{LogisticFamilyEx}.
Nevertheless, it is not clear whether the statements \eqref{PEdecreases} or \eqref{SElimitExists} hold, neither in the general case nor for the logistic maps.
Note that a sufficient condition for \eqref{SElimitExists} is the monotone decrease of the sorting entropy $h_{\mu, \triangle}^{\mathbf X}(T, d)$ with increasing $d$.
However, the sorting entropy and the permutation entropy do not necessarily decrease for all $d$.

\begin{example}\label{GoldenMeanEx}
  Consider the golden mean map $T_{gm}: [0,1]\hookleftarrow$ defined by 
  \begin {equation*}
      T_{gm}(\omega) = \begin{cases}
			  \varphi \omega,  	& 0 \leq \omega \leq \frac{1}{\varphi},\\
			  \varphi \omega - 1,	& \frac{1}{\varphi}  < \omega \leq 1,
		       \end{cases}
  \end {equation*}
  for $\varphi = (\sqrt{5} + 1)/2$ being the golden ratio.
  The map $T_{gm}$ preserves the measure $\mu_{gm}$ \cite{Choe2005} given by $\mu_{gm}(U) = \int_U p(\omega) d\omega$ for all $U \in \mathbb{B}\mleft([0,1]\mright)$ and for
  \begin {equation*}
      p(\omega) = \begin{cases}
			\frac{\varphi^3}{1 + \varphi^2},  	& 0 \leq \omega \leq \frac{1}{\varphi},\\
			\frac{\varphi^2}{1 + \varphi^2},	& \frac{1}{\varphi}  < \omega \leq 1.
		  \end{cases}
  \end {equation*}
  The values of permutation, sorting and conditional entropies for the dynamical system $\mleft([0,1],\mathbb{B}\mleft([0,1]\mright), \mu_{gm}, T_{gm} \mright)$ 
  estimated from the orbit of length $L = 4\cdot 10^6$ are shown in Figure~\ref{GoldenMeanEntropies}.
  Note that neither sorting nor permutation entropy is monotonically decreasing with increasing $d$.
  (The interesting fact that for all $d = 1,2,\ldots,9$ the conditional entropy and the KS entropy coincide is explained in Subsection~\ref{sectMarkov}.)

  \begin{figure}[!th]
	  \centering
	  \includegraphics[scale=0.48]{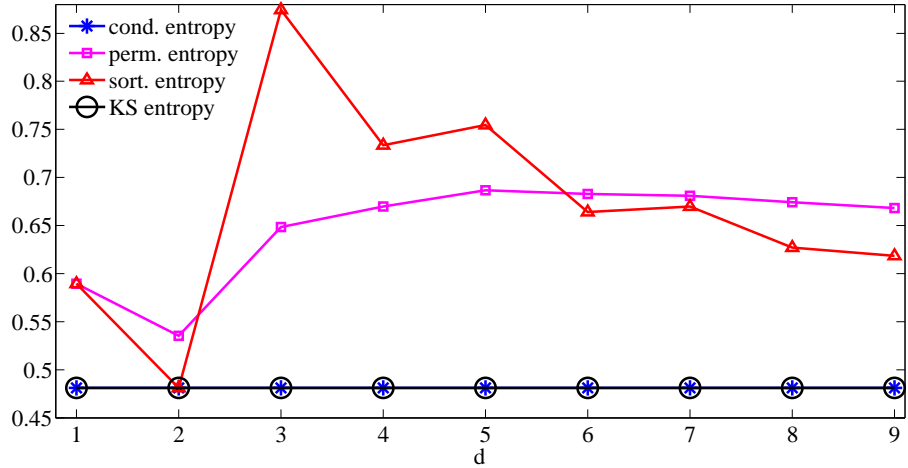}
          \caption{Conditional entropy, permutation entropy and sorting entropy in comparison with the KS entropy for the golden mean map}
      \label{GoldenMeanEntropies}
  \end{figure}
\end{example}

The question when $h_{\mu, \triangle}^{\mathbf X}(T, d)$ or $h_{\mu}^{\mathbf X}(T, d)$ decrease starting from some $d_0 \in \mathbb{N}$ is still open. 
For instance, for the logistic map with $r = 4$ our estimated values of permutation entropy and sorting entropy decrease starting from $d = 7$ and $d = 4$, respectively 
(see Figure~\ref{figureLogistic}).
However, at this point we do not have theoretical results in this direction.

\begin{figure}[!th]
      \centering
      \includegraphics[scale=0.48]{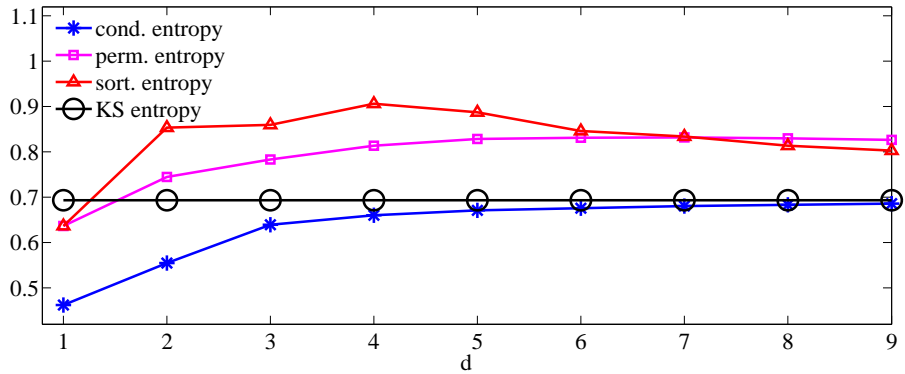}
      \caption{Empirical conditional entropy, permutation entropy and sorting entropy in comparison with the KS entropy for the logistic map with $r=4$}
      \label{figureLogistic}
\end{figure}

\subsection{Markov property of ordinal partition}\label{sectMarkovPropertyOfOP}
Computation of the KS entropy involves taking a supremum over all finite partitions and is unfeasible in the general case. 
A possible solution is provided by the properties given in Definitions \ref{GeneratingPartition} and \ref{MarkovPartition_measure}.
\begin {definition}\label{GeneratingPartition}
  A finite partition ${\mathcal G} = \{G_0, G_1, \ldots ,G_l\} \subset \mathbb{B}(\Omega)$ of $\Omega$ is said to be {\it generating} (under $T$)
  if, given $\mathbb{A}$ the sigma-algebra generated by the sets $T^{-\circ n}(G_i)$ with $i = 0, 1, \ldots, l$ and $n \in \mathbb{N}_0$, 
  for every $B \in \mathbb{B}(\Omega)$ there exists a set $A \in \mathbb{A}$ such that $\mu(A \bigtriangleup B) = 0$.
\end {definition}
\begin {definition}\label{MarkovPartition_measure}
    A finite partition ${\mathcal M} = \{M_0, M_1, \ldots, M_l\} \subset \mathbb{B}(\Omega) $ of $\Omega$ has the {\it Markov property} with respect to $T$ and $\mu$ if
    for all $i_0,i_1, \ldots, i_n \in \{0, 1, \ldots, l\}$ with $n \in \mathbb{N}$ and 
    $\mu\mleft(M_{i_{0}} \cap T^{-1}(M _{i_{1}}) \cap \ldots \cap T^{-\circ(n-1)}(M_{i_{n-1}})\mright) > 0$ it holds 
	\begin {equation}\label{MarkovPartitionDefinition} 
	      \frac{\mu\mleft(M_{i_{0}} \cap T^{-1}(M_{i_{1}}) \cap \ldots \cap T^{-\circ n}(M_{i_{n}})\mright)}
	           {\mu\mleft(M_{i_{0}} \cap T^{-1}(M_{i_{1}}) \cap \ldots \cap T^{-\circ(n-1)}(M_{i_{n-1}})\mright)} =
	      \frac{\mu\mleft(M_{i_{n-1}}\! \cap T^{-1}(M_{i_{n}})\mright)}{\mu(M_{i_{n-1}})}.
	\end {equation}
 \end {definition}
Originally in \cite{ParryWilliams1977} a partition with property \eqref{MarkovPartitionDefinition} was called Markov partition, 
but we use another term to avoid confusion with the topological notion of Markov partition.

By the Kolmogorov-Sinai theorem (for details we refer to \cite[Theorem~4.17]{Walters1982}), if ${\mathcal G}$ is a generating partition then it holds $h_\mu(T) = h_\mu(T, {\mathcal G})$.
Further, it is easy to show (see \cite[Observation 6.2.10]{Kitchens1998}) that for the partition ${\mathcal M}$ with the Markov property it holds
\begin {equation*}
    h_\mu(T, {\mathcal M}) = H({\mathcal M}_2) - H({\mathcal M}).
\end {equation*}
Therefore, if ${\mathcal M}$ is both generating and has the Markov property, then
\begin {equation*}
   h_\mu(T) = H({\mathcal M}_{2}) - H(\mathcal M).
\end {equation*}

From the last two statements it follows the sufficient condition for the coincidence between the conditional entropy and the KS entropy.
\begin{lemma}\label{MarkovPropertyOfOP_Th}
   Let $\mleft( \Omega, \mathbb{B}(\Omega), \mu, T \mright)$ be a measure-preserving dynamical system, 
   ${\mathbf X}$ be an ${\mathbb R}$-valued random vector on $(\Omega,{\mathbb B}(\Omega))$ such that \eqref{ordinalRepresentationKS} is satisfied.
   Then the following two statements hold:
   \begin{myindentpar}{0mm}
      \noindent{\it(i)}\hphantom{{\it(ii)}}If ${\mathcal P}^{\mathbf X}(d)$ has the Markov property for all $d \geq d_0$ then 
      \begin {equation*}
	  h_\mu(T) = \lim_{d \to \infty}h_{\mu, \text{cond}}^{\mathbf X}(T, d).
      \end {equation*}
      {\it(ii)}\hphantom{{\it(i)}}If ${\mathcal P}^{\mathbf X}(d)$ is generating and has the Markov property for some $d \in \mathbb{N}$ then
      \begin {equation}\label{MarkovOrdinalRepresentation}
	  h_\mu(T) = h_{\mu, \text{cond}}^{\mathbf X}(T, d).
      \end {equation}    
   \end{myindentpar}
\end{lemma}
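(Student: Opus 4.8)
The plan is to obtain both statements directly from the three facts the section has just assembled, once one notices that the conditional entropy of ordinal patterns is, by its very definition, the quantity onto which the Markov property collapses the entropy rate. I would first recall that for any finite partition ${\mathcal P}$ the entropy rate is the limit of the increments $H({\mathcal P}_{n+1}) - H({\mathcal P}_n)$, and that when a partition ${\mathcal M}$ has the Markov property these increments are already constant in $n$, so that $h_\mu(T, {\mathcal M}) = H({\mathcal M}_2) - H({\mathcal M})$ (the statement quoted just above from Kitchens). Setting ${\mathcal M} = {\mathcal P}^{\mathbf X}(d)$ and comparing with the definition \eqref{CEofOPdefinition}, I would record the single identity that drives the whole proof: whenever ${\mathcal P}^{\mathbf X}(d)$ has the Markov property,
\begin{equation*}
    h_\mu(T, {\mathcal P}^{\mathbf X}(d)) = H({\mathcal P}^{\mathbf X}(d)_2) - H({\mathcal P}^{\mathbf X}(d)) = h_{\mu, \text{cond}}^{\mathbf X}(T, d).
\end{equation*}

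For statement \emph{(i)} I would apply this identity for every $d \geq d_0$ and then pass to the limit. The hypothesis \eqref{ordinalRepresentationKS} asserts exactly that $h_\mu(T) = \lim_{d \to \infty} h_\mu(T, {\mathcal P}^{\mathbf X}(d))$; replacing $h_\mu(T, {\mathcal P}^{\mathbf X}(d))$ by $h_{\mu, \text{cond}}^{\mathbf X}(T, d)$ under the limit — which is legitimate for all large $d$, and since altering the finitely many terms with $d < d_0$ cannot change the limit — yields $h_\mu(T) = \lim_{d \to \infty} h_{\mu, \text{cond}}^{\mathbf X}(T, d)$.

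For statement \emph{(ii)} I would combine the same identity with the generating hypothesis for the single order $d$ in question. Because ${\mathcal P}^{\mathbf X}(d)$ is generating, the Kolmogorov-Sinai theorem gives $h_\mu(T) = h_\mu(T, {\mathcal P}^{\mathbf X}(d))$; because ${\mathcal P}^{\mathbf X}(d)$ also has the Markov property, the identity rewrites the right-hand side as $h_{\mu, \text{cond}}^{\mathbf X}(T, d)$, which is exactly \eqref{MarkovOrdinalRepresentation}.

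I do not expect a genuine obstacle: both parts are short deductions from the definition \eqref{CEofOPdefinition}, the Markov collapse of the entropy rate, and either \eqref{ordinalRepresentationKS} or the Kolmogorov-Sinai theorem. The only point I would pause on is a bookkeeping one, namely checking that ${\mathcal P}^{\mathbf X}(d)$ is a genuine finite partition in $\mathbb{B}(\Omega)$, so that the quoted results about $H({\mathcal M}_2) - H({\mathcal M})$ and about generating partitions apply verbatim; this is immediate from Definition \ref{OrdPartitionDef}, since $\Pi_d$ is finite and ${\mathbf X}$ has finitely many components.
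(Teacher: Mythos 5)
Your proof is correct and takes essentially the same route the paper intends: the Markov property collapses $h_\mu(T,{\mathcal P}^{\mathbf X}(d))$ to $H({\mathcal P}^{\mathbf X}(d)_2)-H({\mathcal P}^{\mathbf X}(d))=h_{\mu, \text{cond}}^{\mathbf X}(T,d)$, after which \emph{(i)} follows by passing to the limit via \eqref{ordinalRepresentationKS} and \emph{(ii)} follows from the Kolmogorov--Sinai theorem. The paper gives no separate proof but derives the lemma from exactly these two facts stated immediately before it.
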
 
In general, it is complicated to verify that ordinal partitions are generating or have the Markov property; 
however in Subsection~\ref{sectMarkov} this is done for Markov shifts over two symbols.

\subsection{Markov shifts}\label{sectMarkov}
In this subsection we establish the equality of the conditional entropy of order $d$ and the KS entropy for the case of Markov shifts over two symbols.
First we recall the definition of the Markov shifts (see \cite[Section 6]{Kitchens1998} for details), 
then we impose a natural restriction on the observables, state the result and finally discuss some possible extensions.

\begin {definition}\label{MarkovShift_def}
  Let $Q = (q_{ij})$ be an $(l+1) \times (l+1)$ stochastic matrix and 
  $p = (p_0, p_1, \ldots, p_l)$ be a stationary probability vector of $Q$ with $p_0, p_1, \ldots, p_l > 0$.
  Then a {\it Markov shift} is the dynamical system $(A^\mathbb{N}, \mathbb{B}_\Pi(A^\mathbb{N}),m, \sigma)$, where
  \begin{itemize}
   \item $A^\mathbb{N}$ is the space of one-sided sequences over $A =\{0, 1, \ldots, l\}$,
   \item $\mathbb{B}_\Pi(A^\mathbb{N})$ is the Borel sigma-algebra generated by the cylinder sets $C_{a_0 a_1 \ldots a_{n-1}}$ given by
   \begin {equation*}
      C_{a_0 a_1 \ldots a_{n-1}} = \{(s_0, s_1, \ldots) \in A^\mathbb{N} \mid s_0 = a_0, s_1 = a_1, \ldots, s_{n-1} = a_{n-1}\},
   \end {equation*}
   \item $\sigma: A^\mathbb{N} \hookleftarrow$ such that $(\sigma s)_j = s_{j+1}$ for all $j \in \mathbb{N}_0$ and $s = (s_0, s_1, \ldots) \in A^\mathbb{N}$ is a {\it shift map},
   \item $m$ is a {\it Markov measure} on $A^\mathbb{N}$, defined on the cylinder sets $C_{a_0 a_1 \ldots a_{n-1}}$ by 
   \begin {equation*}
      m\mleft(C_{a_0 a_1 \ldots a_{n-1}}\mright) = p_{a_0}q_{a_0 a_1}q_{a_1 a_2}\cdots q_{a_{n-2} a_{n-1}}.
   \end {equation*}
  \end{itemize}
\end {definition}

In the particular case when $q_{0a} = q_{1a} = \ldots = q_{la} = p_a$ for all $a \in A$, the measure $m_B$ defined as follows is said to be a {\it Bernoulli measure}:
\begin {equation*}
   m_B\mleft(C_{a_0 a_1 \ldots a_{n-1}}\mright) = p_{a_0}p_{a_1}\cdots p_{a_{n-1}}.
\end {equation*}
The system $(A^\mathbb{N}, \mathbb{B}_\Pi(A^\mathbb{N}),m_B, \sigma)$ is then called a {\it Bernoulli shift}. 
We use this concept below for illustration purposes.

The natural order on the set of sequences is the lexicographic order defined as follows: 
for $r = (r_0, r_1, \ldots),\, s = (s_0, s_1, \ldots) \in A^\mathbb{N}$ the inequality $r \prec s$ holds iff 
$r_0 < s_0$ or there exists some $k \in\mathbb{N}$ with $r_i = s_i$ for $i = 0, 1, \ldots, k-1$ and $r_k < s_k$.
However, we prefer to be consistent in using the concept of ordinal patterns 
and to keep working with the usual order on observables instead of considering particular orders on different spaces.
Thus, to introduce the ordinal partition for Markov shifts, we impose a restriction on the observables on $A^\mathbb{N}$.

\begin {definition}
  Let us say that the observable $X: A^\mathbb{N} \to \mathbb{R}$ is {\it lexicographic-like} if for almost all $s \in A^\mathbb{N}$ it is injective 
  and if for all $s \in A^\mathbb{N}$ and $m, n \in \mathbb{N}_0$ 
  the inequality $X(\sigma^{\circ m}s) \leq X(\sigma^{\circ n}s)$ holds iff $\sigma^{\circ m}s \preceq \sigma^{\circ n}s$.
\end {definition}
In other words, the fact that an observable $X$ is lexicographic-like means that an $X$ induces the natural order relation on $A^\mathbb{N}$.
A simple example of such $X$ is provided by considering $s \in A^\mathbb{N}$ as $(l+1)$-expansions of a number in $[0,1)$:
\begin {equation*}
  X_\text{expans}((s_0, s_1, \ldots)) = \sum_{j=0}^\infty \mleft(\frac{1}{l+1}\mright)^{j+1} s_j.
\end {equation*}
Note that since a lexicographic-like $X$ is injective for almost all $s \in A^\mathbb{N}$, 
it provides the ordinal representation \eqref{ordinalRepresentationKS} of the KS entropy (see \cite{Keller2012}).

Recall that the dynamical system $\left( \Omega, \mathbb{B}(\Omega), \mu, T \right)$ is {\it ergodic} 
if for every $B \in \mathbb{B}(\Omega)$ with $T^{-1}B = B$ it holds either $\mu(B)= 0$ or $\mu(B)= 1$.
In Subsection~\ref{sectionMarkov2symbols} we prove the following statement.
\begin {lemma}\label{theorem2letters}
  Let $(\{0, 1\}^\mathbb{N}, \mathbb{B}_\Pi({0, 1}^\mathbb{N}),m, \sigma)$ be an ergodic Markov shift over two symbols.
  If the random variable $X$ is lexicographic-like, then the ordinal partition ${\mathcal P}^X(d)$ is generating and has the Markov property for all $d \in \mathbb{N}$.
\end {lemma}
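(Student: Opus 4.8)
The plan is to analyze the structure of the ordinal partition ${\mathcal P}^X(d)$ for a two-symbol shift very explicitly, exploiting the fact that a lexicographic-like observable $X$ orders points of $\{0,1\}^{\mathbb N}$ exactly as the lexicographic order does. The key observation I would start from is that, for a single sequence $s$, knowing the ordinal pattern of the vector $(X(\sigma^{\circ d}s),\ldots,X(\sigma^{\circ 1}s),X(s))$ is equivalent to knowing all the pairwise lexicographic comparisons among the $d+1$ shifted sequences $s,\sigma s,\ldots,\sigma^{\circ d}s$. So I would first translate ``$\omega$ lies in the cell $P_\pi$'' into a combinatorial condition on the finite word $s_0 s_1 \ldots s_{d}$ (more precisely on a word of some bounded length), and identify which cells $P_\pi$ actually carry positive measure.

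First I would show that each ordinal cell $P_\pi$ of positive measure is, up to a $\mu$-null set, a finite union of cylinder sets $C_{a_0 \ldots a_{k-1}}$ for some fixed word length $k = k(d)$. The point is that over the binary alphabet the lexicographic comparison of $\sigma^{\circ i}s$ and $\sigma^{\circ j}s$ (for $0\le i,j\le d$) is decided by finitely many coordinates of $s$ with probability one; the only sequences for which the comparison needs arbitrarily many coordinates form a null set (they are the sequences on which two shifts agree forever, i.e.\ eventually periodic coincidences, which have measure zero under an ergodic Markov measure with a strictly positive stationary vector). Hence the ordinal partition refines, modulo $\mu$, into a partition by cylinders of bounded length, and this is what makes it tractable.

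Next I would establish the two required properties. For \textbf{generating}: since ${\mathcal P}^X(d)$ separates points by the value of $X$ along the orbit, and $X$ is lexicographic-like (hence injective a.e.\ and order-reflecting), the join $\bigvee_{n\ge 0}\sigma^{-\circ n}{\mathcal P}^X(d)$ refines the cylinder partition into the sets $C_{a_0}$ — i.e.\ it recovers $s_0$ a.e., and by shift-invariance every coordinate $s_j$ — so the sigma-algebra generated coincides mod $\mu$ with $\mathbb B_\Pi$. This uses the remark already made in the text that a lexicographic-like $X$ yields the ordinal representation \eqref{ordinalRepresentationKS}, so I would lean on that and mainly verify the clean separation of the first coordinate. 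For the \textbf{Markov property}: because each cell is (mod $\mu$) a union of cylinders of length $k$, and the Markov measure $m$ satisfies $m(C_{a_0\ldots a_n}) = m(C_{a_0\ldots a_{n-1}})\,q_{a_{n-1}a_n}$, the conditional probability of the future of the ordinal symbol sequence given the past depends only on the last symbol word, which in turn is determined by the current ordinal cell; I would check the defining identity \eqref{MarkovPartitionDefinition} by expanding both sides in terms of cylinder measures and cancelling using the one-step Markov structure of $m$.

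The main obstacle, and where the binary hypothesis is essential, is the combinatorial step: showing that the ordinal pattern of the block is determined by (and conversely determines, mod $\mu$) a bounded-length word, so that ordinal cells align with a cylinder partition on which the Markov identity can be applied directly. Over two symbols, lexicographic comparison of two shifts reduces to finding the first position where the two tails differ, and the bound on how far one must look is controlled by the pattern order $d$ together with a null-set argument for ties; this is clean for $\{0,1\}$ but is exactly the structure that breaks down for larger alphabets, which is presumably why the statement is restricted to two symbols. I expect the bulk of the work to be in making the cylinder decomposition of $P_\pi$ precise (tracking the correct word length $k(d)$ and discarding the measure-zero set of ambiguous sequences), after which both generating and the Markov property follow fairly mechanically from the product structure of $m$.
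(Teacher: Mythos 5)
There is a genuine gap: your central structural claim --- that each ordinal cell $P_\pi$ is, modulo a $\mu$-null set, a \emph{finite} union of cylinders of some \emph{bounded} length $k(d)$ --- is false for $d\geq 2$. It is true that for a.e.\ $s$ the lexicographic comparison of $\sigma^{\circ i}s$ and $\sigma^{\circ j}s$ is decided after finitely many coordinates, but the number of coordinates needed is not uniformly bounded on a set of full measure. Concretely, take $d=2$ and $s\in C_{010}$: whether $X(s)\gtrless X(\sigma^{\circ 2}s)$ is decided only at the first position where the alternating prefix $0101\ldots$ breaks, so the cell $\{s\in C_{010}\,\mid\,\sigma^{\circ 2}s\prec s\prec\sigma s\}$ decomposes as $C_{0100}\cup C_{010100}\cup C_{01010100}\cup\cdots$, an infinite union of cylinders of unbounded length, each of positive measure for a generic ergodic Markov measure. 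You conflate ``finitely many coordinates suffice a.s.'' with ``a bounded number of coordinates suffices a.s.''; only the former holds. Since your route to the Markov property consists of expanding both sides of \eqref{MarkovPartitionDefinition} over bounded-length cylinders and cancelling, it does not get off the ground. (A secondary issue: even if cells \emph{were} unions of length-$k$ cylinders, a partition refining a Markov partition by cylinders need not itself have the Markov property, so some additional structural input is unavoidable.)

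The paper's proof uses a different and weaker structural fact that is actually true: modulo the null invariant set $O=\bigcup_n\sigma^{-\circ n}(\{\overline{0}\})$, each cell of ${\mathcal P}^X(d)$ is \emph{contained in} a single cylinder of length $d$ (Lemma~\ref{EqualValuesLemma}) --- containment, not equality up to null sets --- and, crucially, the cell of $s$ is determined by $s_0$ together with the cell of $\sigma s$, i.e.\ $P_i\cap\sigma^{-1}(P_j)=C_{a_0}\cap\sigma^{-1}(P_j)$ or is null (Lemma~\ref{MarkovPartition2letters}). This one-step determinism is exactly where the binary alphabet enters: for $s_0=0$, either $s_k=1$ (forcing $X(s)<X(\sigma^{\circ k}s)$ outright) or $s_k=0$, in which case the comparison of $X(s)$ with $X(\sigma^{\circ k}s)$ is inherited from that of $X(\sigma s)$ with $X(\sigma^{\circ(k+1)}s)$, which is already fixed by $P_j$. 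These two properties feed into an abstract criterion (Lemma~\ref{MarkovityNecessaryConditionTh}) which yields both the generating and the Markov property. Your argument for the generating part (the ordinal partition refines $\{C_0,C_1\}$ a.e., hence the join recovers every coordinate) is essentially correct and matches the paper; it is the Markov half that needs to be rebuilt around the one-step determinism rather than around a bounded cylinder decomposition.
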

As a direct consequence of Lemmas~\ref{MarkovPropertyOfOP_Th} and~\ref{theorem2letters} we obtain the following.
\begin{theorem}\label{MarkovShift_cor}
  Under the assumptions of Lemma~\ref{theorem2letters} for all $d\in \mathbb{N}$ it holds 
  \begin {equation*}
    h_m(\sigma) = h_{m, \text{cond}}^{X}(\sigma, d).
  \end {equation*}
\end{theorem}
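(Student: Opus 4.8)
The plan is to apply part~{\it(ii)} of Lemma~\ref{MarkovPropertyOfOP_Th}, whose hypotheses are delivered almost verbatim by Lemma~\ref{theorem2letters} together with the defining properties of a lexicographic-like observable. Thus the theorem carries no new analytic content: it is a matter of checking that all the assumptions line up and then chaining the two lemmas.

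First I would record that the scalar random variable $X$ in the statement plays the role of the random vector ${\mathbf X}$ in Lemma~\ref{MarkovPropertyOfOP_Th} (the case $N = 1$). That lemma requires the ordinal representation \eqref{ordinalRepresentationKS} to hold; as already observed in the text preceding Lemma~\ref{theorem2letters}, a lexicographic-like $X$ is injective for almost all $s \in A^\mathbb{N}$ and therefore provides \eqref{ordinalRepresentationKS} by \cite{Keller2012}. Hence that hypothesis is satisfied in the present setting.

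Next, fixing an arbitrary $d \in \mathbb{N}$, I would invoke Lemma~\ref{theorem2letters}: under the stated assumptions of an ergodic Markov shift over two symbols with $X$ lexicographic-like, the ordinal partition ${\mathcal P}^X(d)$ is simultaneously generating and has the Markov property. These are precisely the two conditions demanded by Lemma~\ref{MarkovPropertyOfOP_Th}{\it(ii)} for that single value of $d$, so applying it yields $h_m(\sigma) = h_{m, \text{cond}}^{X}(\sigma, d)$. Since $d$ was arbitrary, the equality holds for all $d \in \mathbb{N}$, which is the assertion.

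The main obstacle does not reside in the theorem itself but in its two supporting lemmas, which I would treat as established. The substantive work lives in Lemma~\ref{theorem2letters} --- confirming that the ordinal partitions of a two-symbol Markov shift are at once generating and Markov --- and in Lemma~\ref{MarkovPropertyOfOP_Th}{\it(ii)}, which fuses the Kolmogorov-Sinai theorem (a generating partition ${\mathcal G}$ gives $h_\mu(T) = h_\mu(T, {\mathcal G})$) with the Markov-partition identity $h_\mu(T, {\mathcal M}) = H({\mathcal M}_2) - H({\mathcal M})$ and the definition \eqref{CEofOPdefinition}. Once these are in hand, the present statement follows immediately by the chaining above; the only point requiring care is the routine verification that the scalar lexicographic-like setting does satisfy \eqref{ordinalRepresentationKS}.
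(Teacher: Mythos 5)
Your proposal matches the paper's own proof exactly: the paper derives Theorem~\ref{MarkovShift_cor} as a direct consequence of Lemma~\ref{theorem2letters} (which supplies the generating and Markov properties of ${\mathcal P}^X(d)$ for every $d$) combined with Lemma~\ref{MarkovPropertyOfOP_Th}{\it(ii)}, with the ordinal representation \eqref{ordinalRepresentationKS} guaranteed by the almost-everywhere injectivity of a lexicographic-like $X$. The chaining you describe is precisely the intended argument, so the proposal is correct and takes the same route.
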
 

\begin{example}\label{Bernoulli2Ex}
  Figure~\ref{figureBernulli1} illustrates Theorem~\ref{MarkovShift_cor} for the Bernoulli shift over two symbols with $m_B(C_0) = 0.663$, $m_B(C_1) = 0.337$.
  For all $d = 1,2,\ldots,9$ the empirical conditional entropy computed from an orbit of length $L = 3.6 \cdot 10^6$ nearly coincides with the theoretical KS entropy $h_{m_B}(\sigma)$.
  Meanwhile, the empirical permutation entropy differs from the KS entropy significantly.
  \begin{figure}[h]
      \centering
      \includegraphics[scale=0.48]{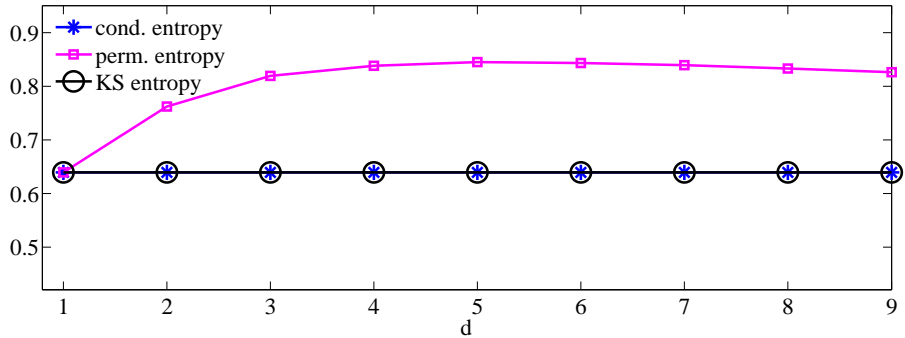}
      \caption{Empirical conditional entropy and permutation entropy in comparison with the KS entropy for the Bernoulli shift over two symbols}
      \label{figureBernulli1}
  \end{figure}
\end{example}

The result established in Theorem~\ref{MarkovShift_cor} naturally extends to the class of maps that are order-isomorphic to an ergodic Markov shift over two symbols
(the concept of order-isomorphism is introduced in \cite{Amigo2010}).
An example of a map being order-isomorphic to a Markov shift is the golden mean map considered in Example~\ref{GoldenMeanEx}.
It explains the coincidence of the conditional entropy and the KS entropy in Figure~\ref{GoldenMeanEntropies}.
Note that the logistic map for $r = 4$ is isomorphic, but not order-isomorphic to an ergodic Markov shift over two symbols \cite[Subsection 3.4.1]{Amigo2010}. 
Therefore in the case of the logistic map with $r = 4$ the conditional entropy for finite $d$ does not coincide with the KS entropy (see Figure~\ref{figureLogistic}).

Theorem~\ref{MarkovShift_cor} cannot be extended to Markov shifts over a general alphabet.
One can rigorously show that for the Bernoulli shifts over more than two symbols, ${\mathcal P}^X(d)$ does not have the Markov property.

\begin{example}\label{Bernoulli34Ex}
  Figure~\ref{figureBernulli23} represents estimated values of the empirical conditional and permutation entropies for the Bernoulli shifts over three and four symbols.
  Although these shifts have the same KS entropy as the shift in Figure \ref{figureBernulli1}, their conditional entropies differ significantly.

 \begin{figure}[h]
      \centering
      \includegraphics[scale=0.48]{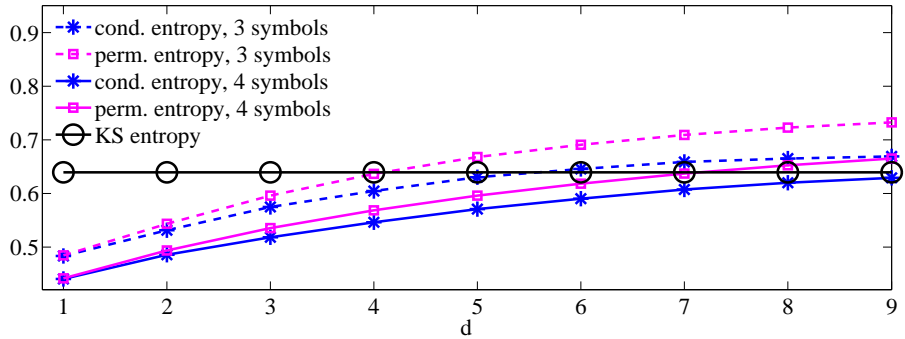}
      \caption{Empirical conditional entropy and permutation entropy in comparison with the KS entropy for Bernoulli shifts over three and four symbols}
      \label{figureBernulli23}
  \end{figure}
\end{example}

\subsection{Periodic case}\label{sectPeriodicCase}
Here we relate the conditional entropy to the KS entropy in the case of periodic dynamics. 
By periodic dynamical system we mean a system such that the set of periodic points has measure $1$. 
Though it is well known that the KS entropy of a periodic dynamical system is equal to zero,  
the permutation entropy of order $d$ can be arbitrarily large in this case and thus does not provide a reliable estimate for the KS entropy. 
In Theorem~\ref{AllPeriodicPointsTh} we show that 
the conditional entropy of a periodic dynamical system is equal to the KS entropy starting from some finite order $d$, which advantages the conditional entropy over the permutation entropy.

\begin{theorem}\label{AllPeriodicPointsTh}
  Let $\mleft( \Omega, \mathbb{B}(\Omega), \mu, T \mright)$ be a measure-preserving dynamical system.
  Suppose that the set of periodic points of $\Omega$ with period not exceeding $k \in \mathbb{N}$ has measure $1$, then for all $d \in \mathbb{N}$ with
  $d \geq k$ it holds
\begin {equation*}
  h_{\mu, \text{cond}}^{\mathbf X}(T, d) = h_\mu(T) = 0.
\end {equation*}
\end{theorem}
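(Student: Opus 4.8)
The statement splits into two independent claims, $h_\mu(T)=0$ and $h_{\mu,\text{cond}}^{\mathbf X}(T,d)=0$ for $d\ge k$. For the first, the plan is to set $\kappa=\operatorname{lcm}(1,2,\dots,k)$ and observe that every periodic point of period at most $k$ is fixed by $T^{\circ\kappa}$, so that $T^{\circ\kappa}=\mathrm{id}$ $\mu$-almost everywhere. Hence for any finite partition $\mathcal P$ one has $T^{-\circ j}\mathcal P=T^{-\circ(j\bmod\kappa)}\mathcal P$ modulo null sets, so that $\mathcal P_n=\mathcal P_\kappa$ (mod $0$) for all $n\ge\kappa$ and $H(\mathcal P_n)$ stays bounded by $H(\mathcal P_\kappa)$. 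Then $h_\mu(T,\mathcal P)=\lim_{n\to\infty}H(\mathcal P_n)/n=0$, and taking the supremum over all finite $\mathcal P$ gives $h_\mu(T)=0$.

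For the conditional entropy, I would rewrite $h_{\mu,\text{cond}}^{\mathbf X}(T,d)=H({\mathcal P}^{\mathbf X}(d)_2)-H({\mathcal P}^{\mathbf X}(d))$ as the conditional entropy $H\bigl(T^{-1}{\mathcal P}^{\mathbf X}(d)\,\bigm|\,{\mathcal P}^{\mathbf X}(d)\bigr)$. This quantity vanishes precisely when $T^{-1}{\mathcal P}^{\mathbf X}(d)$ is coarser than ${\mathcal P}^{\mathbf X}(d)$ modulo null sets, that is, when for $\mu$-almost every $\omega$ the ordinal pattern $\boldsymbol\pi$ of order $d$ at $\omega$ determines the pattern $\boldsymbol\xi$ of order $d$ at $T(\omega)$. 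So the whole problem reduces to proving this \emph{determination property} on the full-measure, $T$-invariant set $\Omega_{\mathrm{per}}$ of periodic points of period at most $k$.

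To establish determination I would argue componentwise. Fix a component $X=X_i$ and a point $\omega\in\Omega_{\mathrm{per}}$ of period $p\le k\le d$, and write $v_j=X(T^{\circ j}(\omega))$, a sequence with $v_{j+p}=v_j$. The pattern at $\omega$ encodes the order relations, with the tie-break convention of Definition~\ref{OrdPatternDef}, of $v_0,\dots,v_d$, while the pattern at $T(\omega)$ encodes those of $v_1,\dots,v_{d+1}$. Since $d\ge p$ the window has length $d+1\ge p+1$, hence contains at least one full period; moreover the only new entry $v_{d+1}$ satisfies $v_{d+1}=v_{d+1-p}$ with $1\le d+1-p\le d$, so it merely repeats a value already occurring among $v_1,\dots,v_d$. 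Therefore the order relations of $v_1,\dots,v_{d+1}$, and with them the successor pattern together with its tie-breaks after reindexing, are recoverable from those of $v_0,\dots,v_d$. Running this for every component $i=1,\dots,N$ shows that the joint pattern $\boldsymbol\pi$ determines the joint successor $\boldsymbol\xi$.

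The main obstacle is the bookkeeping around equal values: the tie-break convention does not by itself reveal the full equality structure of a window, since a pattern such as $(d,d-1,\dots,0)$ is compatible both with strictly ordered and with tied entries, so one must check that the repeated value $v_{d+1}=v_{d+1-p}$ can in fact be located from the pattern alone. The enabling observation is that periodicity forbids a strictly monotone window of length exceeding $p$: because $v_0=v_p$ with $p\le d$, any ordering consistent with \emph{both} the observed pattern and period at most $d$ is forced to exhibit exactly the equalities $v_i=v_{i+p}$, which pins down the level sets of one period and their cyclic order. I would make this precise by showing that among all period-at-most-$d$ sequences realizing a given order-$d$ pattern the induced successor pattern is unique; the vanishing of $H\bigl(T^{-1}{\mathcal P}^{\mathbf X}(d)\,\bigm|\,{\mathcal P}^{\mathbf X}(d)\bigr)$ then follows, together with the routine mod-$0$ care needed to pass from this pointwise determination on $\Omega_{\mathrm{per}}$ to the partition identity.
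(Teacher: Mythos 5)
Your overall route is the one the paper takes: the claim $h_\mu(T)=0$ is routine (your $\operatorname{lcm}$ argument is fine; the paper simply cites it as well known), and the substance is the reduction of $h_{\mu, \text{cond}}^{\mathbf X}(T,d)=0$ to the statement that, on the full-measure set of periodic points of period $p\le k\le d$, the ordinal pattern of order $d$ at $\omega$ determines the pattern at $T(\omega)$, because the single new value $X(T^{\circ (d+1)}\omega)=X(T^{\circ (d+1-p)}\omega)$ merely repeats a value already present in the window. You also correctly isolate the crux, which the paper passes over quickly: a pattern records only a tie-broken permutation, so one must check that the location of the repeated value is recoverable from the pattern alone.

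Your proposed resolution of that crux, however, fails. The claim that any ordering consistent with the observed pattern and period at most $d$ ``is forced to exhibit exactly the equalities $v_i=v_{i+p}$'' is false, and so is the uniqueness lemma you reduce to. Take $d=k=3$ and two period-$3$ orbits with $X(\omega)=3$, $X(T\omega)=2$, $X(T^{\circ 2}\omega)=1$ and $X(\omega')=1$, $X(T\omega')=0$, $X(T^{\circ 2}\omega')=0$. The order-$3$ windows $\bigl(X(T^{\circ 3}\cdot),X(T^{\circ 2}\cdot),X(T\cdot),X(\cdot)\bigr)$ are $(3,1,2,3)$ and $(1,0,0,1)$, and both have ordinal pattern $(3,0,2,1)$; but the windows at $T\omega$ and $T\omega'$ are $(2,3,1,2)$ and $(0,1,0,0)$, with distinct patterns $(1,3,0,2)$ and $(1,3,2,0)$. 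So one pattern has two successors, and with $\mu$ uniform on these six points one computes $h_{\mu, \text{cond}}^{X}(T,3)=\tfrac{1}{3}\ln 2>0$. The obstruction is precisely the extra equality $X(T\omega')=X(T^{\circ 2}\omega')$, which is not of the form $v_i=v_{i+p}$. What rescues the argument --- and is implicitly used in the paper's proof, which reads the period $l$ off $\pi$ as the entry immediately following $d$ --- is injectivity of each component $X_i$ on the periodic orbits (almost everywhere), as holds for ${\mathbf X}=\mathrm{id}$ in the paper's examples: then the value classes inside a window are exactly the residue classes modulo the period, your forced-equality claim becomes true, and the determination property follows. You need to add this hypothesis explicitly (or derive it from standing assumptions on ${\mathbf X}$); without it the determination property, and hence the identity $h_{\mu, \text{cond}}^{\mathbf X}(T,d)=0$ for arbitrary ${\mathbf X}$, genuinely fails.
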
 
The proof is given in Subsection~\ref{AllPeriodicPointsTh_ProofSect}.
\begin{example}\label{RotationMapsEx}
    In order to illustrate the behavior of permutation and conditional entropies of periodic dynamical systems, 
    consider the rotation maps $g_\alpha(\omega) = (\omega + \alpha) \mod 1$ on the interval $[0,1]$ with the Lebesgue measure $\lambda$.
    Let $\alpha$ be rational, then $g_\alpha(\omega)$ provides a periodic behavior and it holds $h_\lambda ^\mathrm{id}(g_\alpha) = 0$.
    Figure~\ref{figureRotation} illustrates conditional and permutation entropies for the rotation maps for $d = 4$ and $d = 8$ for $\alpha$ varying with step 0.001. 
    For both values of $d$ the conditional entropy is more close to zero than the permutation entropy
    since periodic orbits provide various ordinal patterns, but most of them have one and the same successor.
    Note that for those values of $\alpha$ forcing periods shorter than $d$ (for instance for $\alpha = 0.25$ all $\omega \in [0,1]$ have period $4$) 
    it holds $h_{\lambda, \text{cond}}^\mathrm{id}(g_\alpha, d) = h_{\lambda}(g_\alpha) = 0$ as provided by Theorem~\ref{AllPeriodicPointsTh}.

    \begin{figure}[h]
      \begin{minipage}[h]{0.49\hsize}
	  \centering
	  \includegraphics[scale=0.45]{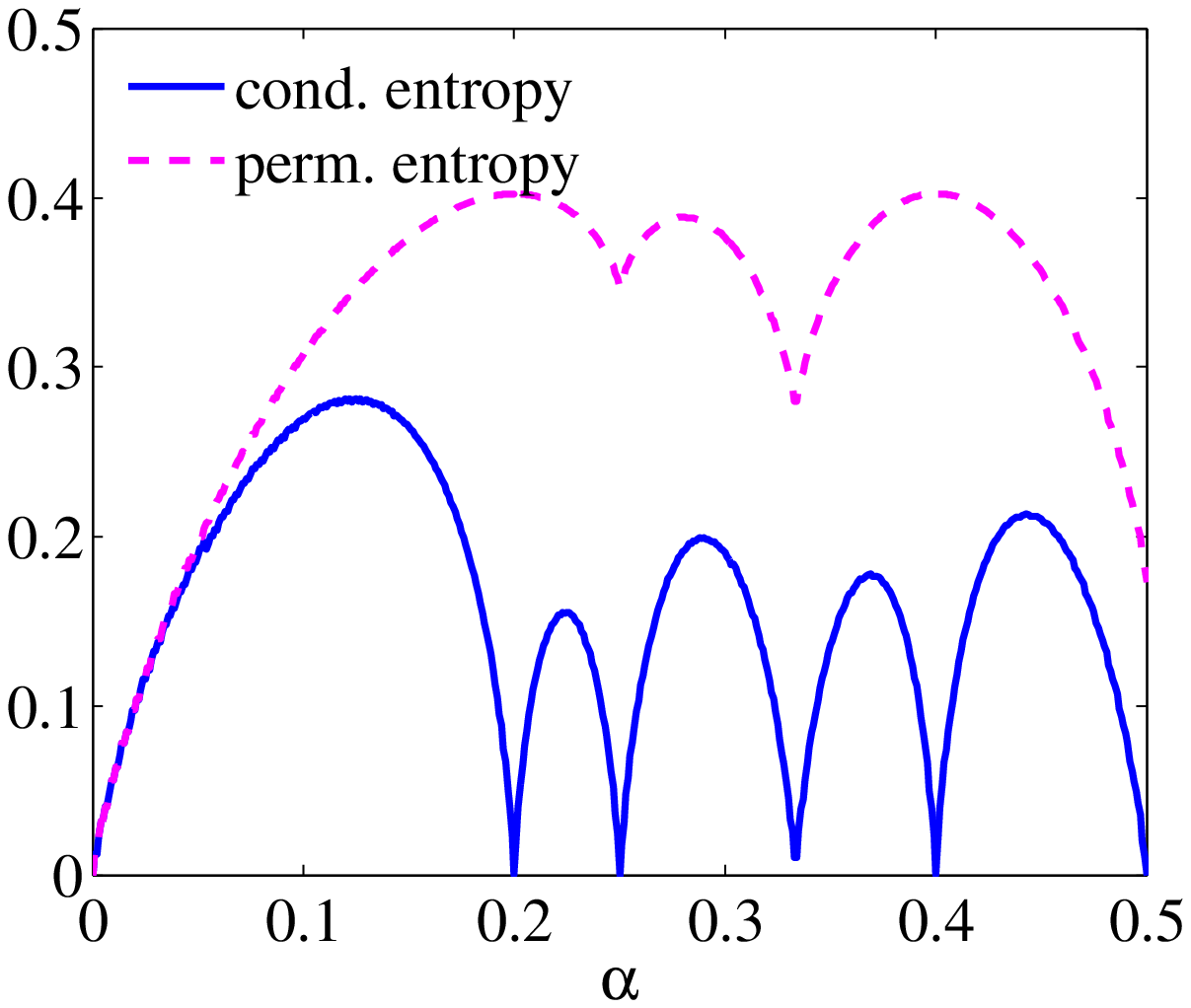}
	  
	  (a)
      \end{minipage}
      \begin{minipage}[h]{0.49\hsize}
	  \centering
	  \includegraphics[scale=0.45]{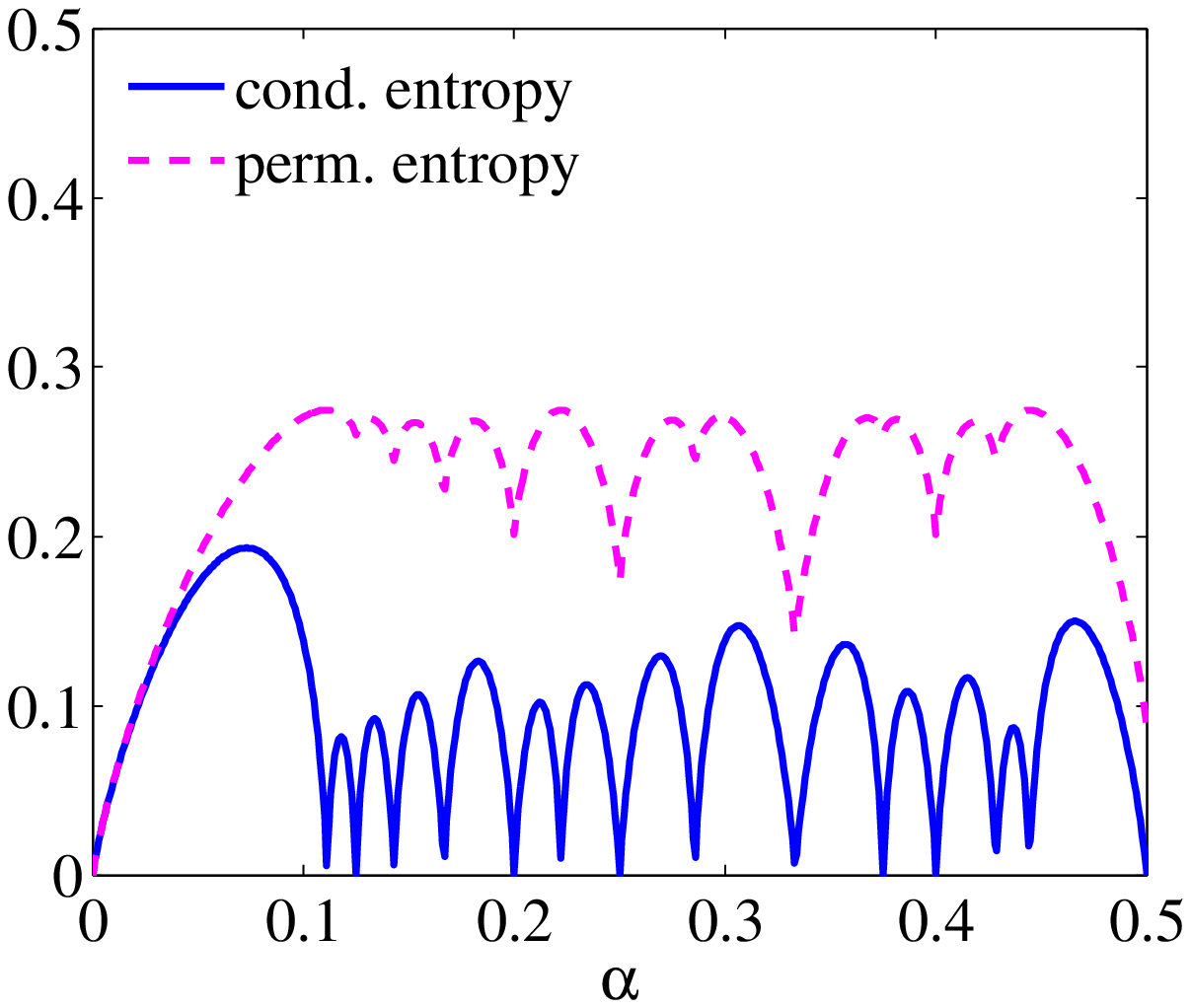}
	  
	  (b)
      \end{minipage}
      \caption{Conditional and permutation entropy for rotation maps for $d = 4$ (a) and $d = 8$ (b)}
      \label{figureRotation}
    \end{figure}
\end{example}

\subsection{Relationship between permutation entropy and Kolmogorov-Sinai entropy}\label{sectOrdinalKS}
We finish this section by giving several reasons for why the permutation entropy of finite $d$ does not provide an appropriate estimation of the KS entropy (even if $ h_\mu(T) = h_\mu ^{\mathbf X}(T)$).
First, the permutation entropy of order $d$ converges to $h_\mu ^{\mathbf X}(T)$ rather slowly \cite{BandtKellerPompe2002}.
Second, the permutation entropy of order $d$ is bounded from above, which means that for a relatively small $d$ a relatively large KS entropy cannot be correctly estimated.
Indeed, given an ${\mathbb R}$-valued random vector ${\mathbf X}=(X_1,X_2,\ldots ,X_N)$ on $(\Omega,{\mathbb B}(\Omega))$, for all $d \in {\mathbb N}$ it holds
    \begin {equation}\label{PEboundary_ineq}
	h_\mu ^{\bf X}(T, d) \leq \frac{N\ln\mleft((d+1)!\mright)}{d}.
    \end {equation}
To see this recall that there are $(d+1)!$ different ordinal patterns of order $d$.
Therefore by general properties of the Shannon entropy we have
    \begin {equation*}
	H({\mathcal P}^{\bf X}(d)) \leq \ln((d+1)!)^N = N \ln\mleft((d+1)!\mright)
    \end {equation*}
and inequality \eqref{PEboundary_ineq} becomes obvious.

Finally, as we have already mentioned in Subsection~\ref{sectPeriodicCase}, the permutation entropy of order $d$ can be arbitrarily large for simple systems.
In particular, for any given $d \in {\mathbb N}$ one can construct a periodic dynamical system such that the permutation entropy of order $d$ reaches the maximal possible value 
provided by inequality \eqref{PEboundary_ineq}, while $h_\mu^{\bf X}(T) = h_\mu(T) = 0$. 
\begin{example}\label{HighPEandLowKS_ex}
 Consider a map $T_{p}(\omega)$, defined as follows
      \begin {equation*}
 T_{p}(\omega) =  \begin{cases}
			    \omega + \frac{5}{6},  		& 0 	   \leq \omega \leq \frac{1}{6},\\
			    \omega - \frac{1}{6}, 		& \frac{1}{6} < \omega \leq \frac{2}{6},\\
			    \omega + \frac{1}{6}, 		& \frac{2}{6} < \omega \leq \frac{4}{6},\\
			    \omega - \frac{3}{6}, 		& \frac{4}{6} < \omega \leq 1,
		  \end{cases}
      \end {equation*}
  on the interval $\Omega = [0,1]$ with the Lebesgue measure $\lambda$. 
  All points $\omega \in \Omega$ are periodic with period $6$, as one can easily check, therefore $h_\lambda(T_{p}) = 0$.
  However, all ordinal patterns of order $d = 1, 2$ occur with equal frequency $\frac{1}{(d+1)!}$, 
  which provides $h_\lambda^\mathrm{id}(T_{p}, d) = \frac{1}{d}\ln\left((d+1)!\right)$ for  $d = 1, 2$ (cf. \eqref{PEboundary_ineq}).
\end{example}

\section{Interrelationship between conditional entropy of ordinal patterns, permutation and sorting entropy}\label{sectionCEofOP_PE_SE}
In this section we consider the relationship between the conditional entropy of order $d$, 
the permutation entropy $h_\mu ^{\mathbf X}(T, d)$ and the sorting entropy $h_{\mu, \triangle}^{\mathbf X}(T, d)$.
Besides being interesting in its own right, this relationship is used to prove Theorem~\ref{CEofOP_andOrdinalEntr}. 
\begin {lemma}\label{SEandCEofOP}
   Let $\mleft( \Omega, \mathbb{B}(\Omega), \mu, T \mright)$ be a measure-preserving dynamical system. 
   Then for all $d \in \mathbb{N}$ it holds
   \begin {equation}\label{SEandCEofOPineq}
       h_{\mu, \text{cond}}^{\mathbf X}(T, d) \leq h_{\mu, \triangle}^{\mathbf X}(T, d).
   \end {equation}
   Moreover, if for some $d_0 \in \mathbb{N}$ it holds $h_\mu ^{\mathbf X}(T, d_0+1) \leq h_\mu ^{\mathbf X}(T, d_0)$, then we get
    \begin {equation}\label{SEPEandCEofOPineq}
       h_{\mu, \text{cond}}^{\mathbf X}(T, d_0) \leq h_{\mu, \triangle}^{\mathbf X}(T, d_0) \leq h_\mu ^{\mathbf X}(T, d_0).
   \end {equation}
\end {lemma}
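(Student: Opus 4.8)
The plan is to treat the two inequalities by separate arguments: the first inequality \eqref{SEandCEofOPineq} rests on a comparison of partitions by refinement, while the rightmost inequality in \eqref{SEPEandCEofOPineq} follows from an elementary manipulation of the permutation-entropy hypothesis. The substantive step is the refinement comparison; everything else is bookkeeping.

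First I would establish that the order-$(d+1)$ ordinal partition is finer than the doubled order-$d$ partition, i.e. that ${\mathcal P}^{\mathbf X}(d+1)$ refines ${\mathcal P}^{\mathbf X}(d)_2 = {\mathcal P}^{\mathbf X}(d) \vee T^{-1}({\mathcal P}^{\mathbf X}(d))$. To see this, fix $\omega \in \Omega$ and a component index $i$, and write $y_j = X_i(T^{\circ j}(\omega))$. By Definition~\ref{OrdPartitionDef} the order-$d$ pattern $\pi_i$ of $\omega$ encodes the order relations among $y_0, y_1, \ldots, y_d$, while the order-$d$ pattern $\xi_i$ of $T\omega$ (the label carried by $T^{-1}({\mathcal P}^{\mathbf X}(d))$) encodes the order relations among $y_1, y_2, \ldots, y_{d+1}$. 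On the other hand, the order-$(d+1)$ pattern of $\omega$ encodes the full order relations among $y_0, y_1, \ldots, y_{d+1}$, and from this full ordering both $\pi_i$ and $\xi_i$ can be read off as the orderings of the two length-$(d+1)$ subwindows. Hence the element of ${\mathcal P}^{\mathbf X}(d+1)$ containing $\omega$ determines the element of ${\mathcal P}^{\mathbf X}(d)_2$ containing $\omega$, which is exactly the refinement claim. By monotonicity of the Shannon entropy under refinement we get $H({\mathcal P}^{\mathbf X}(d+1)) \geq H({\mathcal P}^{\mathbf X}(d)_2)$; subtracting $H({\mathcal P}^{\mathbf X}(d))$ from both sides and recalling the definitions of the sorting and conditional entropies yields \eqref{SEandCEofOPineq}.

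Next I would derive the rightmost inequality of \eqref{SEPEandCEofOPineq} from the hypothesis $h_\mu^{\mathbf X}(T, d_0+1) \leq h_\mu^{\mathbf X}(T, d_0)$. Abbreviating $H_d = H({\mathcal P}^{\mathbf X}(d))$, the hypothesis reads $\frac{H_{d_0+1}}{d_0+1} \leq \frac{H_{d_0}}{d_0}$, equivalently $d_0 H_{d_0+1} \leq (d_0+1) H_{d_0}$. Rearranging gives $d_0(H_{d_0+1} - H_{d_0}) \leq H_{d_0}$, that is $h_{\mu, \triangle}^{\mathbf X}(T, d_0) = H_{d_0+1} - H_{d_0} \leq H_{d_0}/d_0 = h_\mu^{\mathbf X}(T, d_0)$. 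Combining this with \eqref{SEandCEofOPineq} specialized to $d = d_0$ produces the full chain \eqref{SEPEandCEofOPineq}.

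The only step requiring genuine care is the refinement claim, and the single point to watch is the reversed indexing built into Definition~\ref{OrdPartitionDef}, where the observed vector lists the iterates in decreasing order of iterate count: I would verify explicitly that under this convention the order-$d$ pattern at $T\omega$ indeed corresponds to the shifted window $y_1, \ldots, y_{d+1}$, so that the two subpatterns recovered from the order-$(d+1)$ pattern match the labels of ${\mathcal P}^{\mathbf X}(d)$ and $T^{-1}({\mathcal P}^{\mathbf X}(d))$. Once this is pinned down, the entropy monotonicity and the algebraic rearrangement are routine.
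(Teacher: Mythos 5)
Your proof is correct and takes essentially the same route as the paper's: the crux is the inequality $H({\mathcal P}^{\mathbf X}(d)_2)\leq H({\mathcal P}^{\mathbf X}(d+1))$, which the paper simply cites from Keller and Sinn while you supply the underlying refinement argument (correctly, including the shifted-window bookkeeping), and the second chain follows from the identical algebraic rearrangement of $d_0 H({\mathcal P}^{\mathbf X}(d_0+1)) \leq (d_0+1)H({\mathcal P}^{\mathbf X}(d_0))$.
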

\begin {proof}
    It can easily be shown (for details see \cite{KellerSinn2010}) that for all $d \in {\mathbb N}$ it holds
    \begin {equation*}
	H({\mathcal P}^{\mathbf X}(d)_2) \leq H({\mathcal P}^{\mathbf X}(d+1)),
    \end {equation*}
    which implies
    \begin {align*}
	h_{\mu, \text{cond}}^{\mathbf X}(T, d) = H({\mathcal P}^{\mathbf X}(d)_2) - H({\mathcal P}^{\mathbf X}(d)) &\leq H({\mathcal P}^{\mathbf X}(d+1)) - H({\mathcal P}^{\mathbf X}(d))\\
											       &= h_{\mu, \triangle}^{\mathbf X}(T, d),
    \end {align*}
    and the proof of \eqref{SEandCEofOPineq} is complete.

    If $h_\mu ^{\mathbf X}(T, d_0+1) \leq h_\mu ^{\mathbf X}(T, d_0)$ for some $d_0 \in \mathbb{N}$ then we have
    \begin {equation*}
	d_0 H({\mathcal P}^{\mathbf X}(d_0+1)) \leq (d_0+1) H({\mathcal P}^{\mathbf X}(d_0)).
    \end {equation*}    
    Consequently, it holds
    \begin {equation*}
	d_0 (H({\mathcal P}^{\mathbf X}(d_0+1)) - H({\mathcal P}^{\mathbf X}(d_0))) \leq H({\mathcal P}^{\mathbf X}(d_0)),
    \end {equation*}    
    which establishes \eqref{SEPEandCEofOPineq}.
\end {proof}

By Lemma~\ref{SEandCEofOP} we have that the conditional entropy under certain assumption is not greater than the permutation entropy 
and that in the general case the conditional entropy is not greater than the sorting entropy.
Moreover, one can show that in the strong-mixing case the conditional entropy and the sorting entropy asymptotically approach each other. 
To see this recall that the map $T$ is said to be {\it strong-mixing} if for every $A,B \in \mathbb{B}(\Omega)$ it holds
    \begin{equation*}
	\lim_{n \rightarrow \infty}\mu(T^{-\circ n}A \cap B)=\mu(A)\mu(B).
    \end{equation*}
According to \cite{UnakafovaUnakafovKeller2013}, if $\Omega$ is an interval in $\mathbb{R}$ and $T$ is strong-mixing then it holds
    \begin {equation*}
	\lim_{d\to\infty} \mleft( H({\mathcal P}^{\mathrm{id}}(d+1))- H({\mathcal P}^{\mathrm{id}}(d)_2) \mright) = 0.
    \end {equation*}
Together with Lemma~\ref{SEandCEofOP} this implies the following statement.
\begin {corollary}\label{SEandCEofOPeq_propos}
   Let $\mleft( \Omega, \mathbb{B}(\Omega), \mu, T \mright)$ be a measure-preserving dynamical system, where $\Omega$ is an interval in $\mathbb{R}$ and $T$ is strong-mixing. Then  
   \begin {equation*}
   	\lim_{d\to\infty}\mleft(h_{\mu, \triangle}^{\mathrm{id}}(T, d) - h_{\mu, \text{cond}}^{\mathrm{id}}(T, d) \mright) = 0.
    \end {equation*}
\end {corollary}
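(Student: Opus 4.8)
The plan is to reduce the statement to the external estimate recalled just above the corollary, by means of a one-line algebraic cancellation. First I would specialize the two entropies to the observable ${\mathbf X} = \mathrm{id}$ and take their difference. Substituting
$$h_{\mu, \triangle}^{\mathrm{id}}(T, d) = H({\mathcal P}^{\mathrm{id}}(d+1)) - H({\mathcal P}^{\mathrm{id}}(d)), \qquad h_{\mu, \text{cond}}^{\mathrm{id}}(T, d) = H({\mathcal P}^{\mathrm{id}}(d)_2) - H({\mathcal P}^{\mathrm{id}}(d)),$$
the common term $H({\mathcal P}^{\mathrm{id}}(d))$ cancels, leaving
$$h_{\mu, \triangle}^{\mathrm{id}}(T, d) - h_{\mu, \text{cond}}^{\mathrm{id}}(T, d) = H({\mathcal P}^{\mathrm{id}}(d+1)) - H({\mathcal P}^{\mathrm{id}}(d)_2).$$
This identity is the crux of the argument: it re-expresses the quantity of interest \emph{exactly} as the difference whose limit is controlled by the cited result, so that no further analytic manipulation of the two entropies separately is needed.

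Next I would record that this difference is nonnegative for every $d$. This follows immediately from Lemma~\ref{SEandCEofOP} applied with ${\mathbf X} = \mathrm{id}$, which gives $h_{\mu, \text{cond}}^{\mathrm{id}}(T, d) \leq h_{\mu, \triangle}^{\mathrm{id}}(T, d)$, equivalently $H({\mathcal P}^{\mathrm{id}}(d)_2) \leq H({\mathcal P}^{\mathrm{id}}(d+1))$. Thus the sequence I must send to zero is nonnegative, and the convergence asserted by the corollary will in fact be convergence from above.

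Finally, since the hypotheses assume that $\Omega$ is an interval in $\mathbb{R}$ and that $T$ is strong-mixing, I would invoke the result of \cite{UnakafovaUnakafovKeller2013} stated immediately before the corollary, namely $\lim_{d\to\infty}\left(H({\mathcal P}^{\mathrm{id}}(d+1)) - H({\mathcal P}^{\mathrm{id}}(d)_2)\right) = 0$. Combining this with the identity from the first step yields the claim at once. There is essentially no obstacle here: all the analytic difficulty is packaged inside the cited strong-mixing estimate, and the only thing to verify in the present proof is the cancellation of $H({\mathcal P}^{\mathrm{id}}(d))$, with Lemma~\ref{SEandCEofOP} supplying only the sign of the difference.
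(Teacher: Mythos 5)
Your proposal is correct and follows exactly the route the paper intends: the difference $h_{\mu, \triangle}^{\mathrm{id}}(T, d) - h_{\mu, \text{cond}}^{\mathrm{id}}(T, d)$ telescopes to $H({\mathcal P}^{\mathrm{id}}(d+1)) - H({\mathcal P}^{\mathrm{id}}(d)_2)$, whose vanishing limit is precisely the cited strong-mixing result, with Lemma~\ref{SEandCEofOP} contributing only the sign. Nothing is missing.
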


\section{Conclusions}\label{sectionConclusion}
As we have discussed, the conditional entropy of ordinal patterns has rather good properties.
Our theoretical results and numerical experiments show that in many cases the conditional entropy provides a reliable estimation of the KS entropy.
In this regard it is important to note that the conditional entropy is computationally simple: it has the same computational complexity as the permutation entropy.
(The algorithm for fast computing the conditional entropy of ordinal patterns, based on ideas presented in \cite{UnakafovaKeller2013}, will be discussed elsewhere.)

Meanwhile, some questions concerning the conditional entropy of ordinal patterns remain open.
In particular, possible directions of a future work are to find dynamical systems having one of the following properties:
\begin{enumerate}
 \item  The permutation entropy or the sorting entropy monotonically decrease starting from some order $d_0$.
	In this case by Theorem~\ref{CEofOP_andOrdinalEntr}, the conditional entropy provides a better bound for the KS entropy than the permutation entropy.
  \item The ordinal partition for some order $d$ is generating and has the Markov property, while the system is not order-isomorphic to a Markov shift over two symbols.
	In this case by statement (ii) of Lemma~\ref{MarkovPropertyOfOP_Th} the conditional entropy of order $d$ is equal to the KS entropy.
  \item The ordinal partition has the Markov property for all $d \geq d_0$.
	In this case by statement (i) of Lemma~\ref{MarkovPropertyOfOP_Th} the conditional entropy of order $d$ converges to the KS entropy as $d$ tends to infinity.
\end{enumerate}

\section{Proofs}\label{sectionProof}
In Subsections~\ref{CEofOP_andOrdinalEntr_ProofSect} and \ref{AllPeriodicPointsTh_ProofSect} we give proofs of Theorems \ref{CEofOP_andOrdinalEntr} and \ref{AllPeriodicPointsTh}, respectively.
In Subsection~\ref{sectionMarkov2symbols} we prove Lemma~\ref{theorem2letters}, to do this we use an auxiliary result established in Subsection~\ref{subsectionMarkovGenRemarks}.

\subsection{Proof of Theorem \ref{CEofOP_andOrdinalEntr}}\label{CEofOP_andOrdinalEntr_ProofSect}
\begin{proof}
    For any given partition ${\mathcal P}$, the difference $H({\mathcal P}_{n+1}) - H({\mathcal P}_{n})$ decreases monotonically with increasing $n$ \cite[Section~4.2]{CoverThomas2006}.
    In particular, for the ordinal partition ${\mathcal P}^{\mathbf X}(d)$ it holds
    \begin{align*}
	h_\mu(T, {\mathcal P}^{\mathbf X}(d)) = \lim_{n\to\infty} \mleft( H({\mathcal P}^{\mathbf X}(d)_{n+1}) - H({\mathcal P}^{\mathbf X}(d)_{n}) \mright) 
								&\leq H({\mathcal P}^{\mathbf X}(d)_2)     - H({\mathcal P}^{\mathbf X}(d))\\ &= h_{\mu, \text{cond}}^{\mathbf X}(T, d)
    \end{align*}
    and consequently
    \begin {equation*}
	\lim_{d\to\infty}h_\mu(T, {\mathcal P}^{\mathbf X}(d)) \leq \varlimsup_{d\to\infty}h_{\mu, \text{cond}}^{\mathbf X}(T, d).
    \end {equation*}
    The last inequality together with \eqref{ordinalRepresentationKS} and \eqref{SEandCEofOPineq} implies {\it(i)}.

    The statement {\it(ii)} will be proved once we prove the inequality below:
    \begin {equation}\label{PE_SE_ineq}
	\varlimsup_{d \to \infty}h_{\mu, \triangle}^{\mathbf X}(T, d) \leq \varlimsup_{d \to \infty}h_\mu^{\mathbf X}(T, d).
    \end {equation}
    If \eqref{PEdecreases} is satisfied, we get \eqref{PE_SE_ineq} immediately from Lemma~\ref{SEandCEofOP}.
    If \eqref{SElimitExists} is satisfied, by Cesaro's mean theorem \cite[Theorem~4.2.3]{CoverThomas2006} it follows that
    \begin {equation*}
	\lim_{d \to \infty} \frac{1}{d} H({\mathcal P}^{\mathbf X}(d)) = \lim_{d \to \infty} \mleft (H({\mathcal P}^{\mathbf X}(d+1)) - H({\mathcal P}^{\mathbf X}(d)) \mright),
    \end {equation*}
    which is a particular case of \eqref{PE_SE_ineq}, and we are done.
\end{proof}

\subsection{Proof of Theorem \ref{AllPeriodicPointsTh}}\label{AllPeriodicPointsTh_ProofSect}
\begin {proof}
  Since the KS entropy for periodic dynamical systems is equal to zero, it remains to show that it holds
  \begin {equation}\label{CEofOPisZero}
      h_{\mu, \text{cond}}^{\mathbf X}(T, d) = H({\mathcal P}^{\mathbf X}(d)_2) - H({\mathcal P}^{\mathbf X}(d)) = 0.
  \end {equation}
  We prove this now for a one-dimensional random vector ${\mathbf X} = X$, the proof in the general case is completely resembling.

  The idea of the proof is as follows. 
  Consider periodic points with period not exceeding $k$ that belong to some element of ordinal partition ${\mathcal P}^{X}(d)$ for $d \geq k$. 
  We show below that all these points have the same period and, consequently, that the images of all these points are in the same element of the ordinal partition.
  This means that every ordinal pattern of order $d$ has a completely determined successor, which provides \eqref{CEofOPisZero}.

  By assumption there exists a set $\Omega_0 \subset \mathbb{B}(\Omega)$ such that $\mu(\Omega_0) = 1$ and for all $\omega \in \Omega_0$ 
  it holds $T^{\circ l}(\omega) = \omega$ for some $l \in \{1, 2,\ldots, k\}$.
  Let us fix an order $d \geq k$ and take ordinal patterns $\pi, \xi \in \Pi_d$ such that $\mu(P_\pi \cap T^{-1}(P_\xi)) > 0$.
  We aim to prove that it holds
  \begin {equation*}
      \mu(P_\pi \cap T^{-1}(P_\xi)) = \mu(P_\pi).
  \end {equation*} 
  To do this, let us consider some $\omega_1 \in \Omega_0 \cap P_\pi \cap T^{-1}(P_\xi)$, hence $\pi$ and $\xi$ are the ordinal patterns of the vectors 
  \begin {equation*}
    (X(T^{\circ d}(\omega_1)),X(T^{\circ {d-1}}(\omega_1)),\ldots ,X(\omega_1))
  \end {equation*}
  and
  \begin {equation*}
    (X(T^{\circ (d+1)}(\omega_1)),X(T^{\circ {d}}(\omega_1)),\ldots,X(T(\omega_1))),
  \end {equation*}
  respectively.
  Since $\omega_1 \in \Omega_0$, it is periodic with some (minimal) period $l \in \{1, 2,\ldots, k\}$ such that $X(\omega_1) = X(T^{\circ l}(\omega_1))$.
  Together with Definition~\ref{OrdPatternDef} of an ordinal pattern, this implies $\pi = (\ldots, d, (d-l), \ldots)$.
  Now it is clear that all points $\omega \in \Omega_0 \cap P_\pi$ have the same period. 
  Indeed, the ordinal pattern of any point with period $l_2 \leq k$ such that $l_2 \neq l$ is $(\ldots, d, (d-l_2), \ldots) \neq \pi$.
  Therefore for all $\omega \in \Omega_0 \cap P_\pi$ it holds $X(T^{\circ (d+1)}(\omega)) = X(T^{\circ (d+1 - l)}(\omega))$ and the ordinal pattern for the vector
  \begin {equation*}
    ((X(T^{\circ (d+1)}(\omega)),X(T^{\circ {d}}(\omega)),\ldots,X(T(\omega)))
  \end {equation*}
  is obtained from the ordinal pattern $\pi$ in a well-defined way \cite{KellerSinnEmonds2007}:
  by deleting the entry $d$, adding $1$ to all remaining entries and inserting the entry $0$ to the left of the entry $l$. 
  Since $T(\omega_1) \in P_\xi$, for every other $\omega \in \Omega_0 \cap P_\pi$ it also holds $T(\omega) \in P_\xi$.
  Hence for all $\pi, \xi \in \Pi_d$ with $\mu(P_\pi \cap T^{-1}(P_\xi)) > 0$ it holds
  \begin {equation*}
      \mu(P_\pi \cap T^{-1}(P_\xi)) = \mu(\Omega_0 \cap P_\pi \cap T^{-1}(P_\xi)) =  \mu(\Omega_0 \cap P_\pi)  =  \mu(P_\pi), 
  \end {equation*}
  which yields \eqref{CEofOPisZero} and we are done.
\end {proof}

\subsection{Markov property of a partition for the Markov shift}\label{subsectionMarkovGenRemarks}
Hereafter we call the partition ${\mathcal C} = \{C_0, C_1, \ldots, C_l\}$, where $C_0, C_1, \ldots, C_l$ are cylinders, a {\it cylinder partition}.
By the definition of Markov shifts, the cylinder partition is generating and has the Markov property.

To prove Lemma~\ref{theorem2letters} we need the following result.

\begin{lemma}\label{MarkovityNecessaryConditionTh}
  Let $(A^\mathbb{N}, \mathbb{B}_\Pi(A^\mathbb{N}),m, \sigma)$ be a Markov shift over $A = \{0, 1, \ldots, l\}$.
  Suppose that ${\mathcal P} = \{P_0, P_1, \ldots, P_n\}$ is a partition of $A^\mathbb{N}$ with the following properties:
  \begin{enumerate}
      \item[$(i)$]  any $P_i$ is either a subset of some cylinder $C_a$ ($P_i \subseteq C_a$) or an invariant set with zero measure 
		    ($\sigma^{-1}(P_i) = P_i$ and $m\mleft(P_i\mright) = 0$);
      \item[$(ii)$] for any $P_i, P_j\in {\mathcal P}$ it holds either $P_i \cap \sigma^{-1}(P_j) = C_a \cap \sigma^{-1}(P_j)$ for some $a \in A$ or  
		    $m\mleft(P_i \cap \sigma^{-1}(P_j)\mright) = 0$.
  \end{enumerate} 
  Then the partition ${\mathcal P}$ is generating and has the Markov property.
\end{lemma}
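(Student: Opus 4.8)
The plan is to reduce both assertions to the corresponding properties of the cylinder partition ${\mathcal C}=\{C_0,C_1,\ldots,C_l\}$, which is generating and satisfies the Markov property \eqref{MarkovPartitionDefinition} by the definition of a Markov shift. I would first isolate two auxiliary facts. The first is a factorization of the Markov measure: for all $a,b\in A$ and every measurable $G\subseteq C_b$ it holds $m(C_a\cap\sigma^{-1}(G))=q_{ab}\,\frac{p_a}{p_b}\,m(G)$. This is immediate for cylinders $G\subseteq C_b$ from the formula defining $m$, and it extends to arbitrary measurable $G\subseteq C_b$ because $G\mapsto m(C_a\cap\sigma^{-1}(G))$ and $G\mapsto q_{ab}\frac{p_a}{p_b}m(G)$ are two finite measures agreeing on the $\pi$-system of cylinders generating the trace $\sigma$-algebra on $C_b$. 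The second is a structural reduction of $P$-chains to cylinder chains: writing $a_t$ for the symbol with $P_{i_t}\subseteq C_{a_t}$, if $m(P_{i_s}\cap\sigma^{-1}(P_{i_{s+1}}))>0$ for $s=0,1,\ldots,N-1$ then $\bigcap_{t=0}^{N}\sigma^{-\circ t}(P_{i_t})=\bigcap_{t=0}^{N-1}\sigma^{-\circ t}(C_{a_t})\cap\sigma^{-\circ N}(P_{i_N})$. Here the positive-measure hypothesis forces every $P_{i_t}$ with $t<N$ to be of the cylinder-subset type in property~$(i)$, since an invariant null set would make the corresponding transition vanish, so each $a_t$ is well defined; the identity then follows by applying $\sigma^{-\circ s}$ to the positive-measure branch $P_{i_s}\cap\sigma^{-1}(P_{i_{s+1}})=C_{a_s}\cap\sigma^{-1}(P_{i_{s+1}})$ of property~$(ii)$ and replacing one factor at a time.

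Granting these, generating is quick. Since ${\mathcal P}$ partitions $A^\mathbb{N}$ and each non-null element lies in a unique cylinder $C_a$, the elements $P_i\subseteq C_a$ exhaust $C_a$ up to an $m$-null set, so $C_a$ coincides modulo $m$ with a finite union of elements of ${\mathcal P}$. Hence each $C_a$, and by measure-preservation each $\sigma^{-\circ n}(C_a)$, belongs modulo $m$ to the $\sigma$-algebra $\mathbb{A}$ generated by the sets $\sigma^{-\circ n}(P_i)$. The family of sets that are $m$-approximable by $\mathbb{A}$ is a $\sigma$-algebra containing all $\sigma^{-\circ n}(C_a)$; as ${\mathcal C}$ is generating these sets exhaust $\mathbb{B}_\Pi(A^\mathbb{N})$ modulo $m$, so every Borel set is $m$-approximable by $\mathbb{A}$ and ${\mathcal P}$ is generating in the sense of Definition~\ref{GeneratingPartition}.

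For the Markov property I fix indices $i_0,i_1,\ldots,i_n$ with $m(D)>0$, where $D=P_{i_0}\cap\sigma^{-1}(P_{i_1})\cap\cdots\cap\sigma^{-\circ(n-1)}(P_{i_{n-1}})$, and compare the ratio $m(D\cap\sigma^{-\circ n}(P_{i_n}))/m(D)$ with $m(P_{i_{n-1}}\cap\sigma^{-1}(P_{i_n}))/m(P_{i_{n-1}})$. If $m(P_{i_{n-1}}\cap\sigma^{-1}(P_{i_n}))=0$ then both sides are zero, since the left numerator is dominated by $m(\sigma^{-\circ(n-1)}(P_{i_{n-1}}\cap\sigma^{-1}(P_{i_n})))=0$; so I may assume this transition is positive. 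Then $m(D)>0$ makes all pairwise transitions along the chain positive, the structural reduction applies to both $D$ and $D\cap\sigma^{-\circ n}(P_{i_n})$, and peeling the leading cylinders $C_{a_0},\ldots,C_{a_{n-2}}$ off each set by repeated use of the factorization fact gives $m(D)=\Lambda\,m(P_{i_{n-1}})$ and $m(D\cap\sigma^{-\circ n}(P_{i_n}))=\Lambda\,m(C_{a_{n-1}}\cap\sigma^{-1}(P_{i_n}))$ with one and the same constant $\Lambda=\frac{p_{a_0}}{p_{a_{n-1}}}\prod_{k=0}^{n-2}q_{a_k a_{k+1}}>0$. The constants cancel, and the positive-measure branch of property~$(ii)$ gives $C_{a_{n-1}}\cap\sigma^{-1}(P_{i_n})=P_{i_{n-1}}\cap\sigma^{-1}(P_{i_n})$, which turns the cancelled ratio into exactly the right-hand side of \eqref{MarkovPartitionDefinition}.

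The main obstacle is the measure bookkeeping in the third paragraph: one must check that peeling the cylinders off $D$ produces literally the same constant $\Lambda$ as peeling them off $D\cap\sigma^{-\circ n}(P_{i_n})$ (which it does, because in both cases exactly $C_{a_0},\ldots,C_{a_{n-2}}$ are removed and the remaining base set sits inside $C_{a_{n-1}}$), and that every invocation of property~$(ii)$ is justified by a genuinely positive pairwise transition rather than by the trivial null branch. Proving the factorization fact so that it applies to the possibly complicated measurable set $P_{i_{n-1}}$, and not merely to cylinders, together with keeping track of which elements of ${\mathcal P}$ are cylinder subsets and which are invariant null sets, are the points demanding care; the remainder is a telescoping product.
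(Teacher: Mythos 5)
Your proposal is correct and follows essentially the same route as the paper: your ``factorization fact'' together with the structural reduction is exactly the content of the paper's auxiliary Lemma~\ref{MarkovShiftsProperty} (note $q_{ab}\,p_a/p_b = m(C_a\cap\sigma^{-1}(C_b))/m(C_b)$), the generating part is argued identically via refinement of the cylinder partition modulo null sets, and the Markov property is obtained by the same telescoping cancellation. The only difference is cosmetic: your $\pi$-system argument for extending the factorization to arbitrary measurable $G\subseteq C_b$ is a slightly more careful version of the paper's decomposition of $P$ into cylinder pieces.
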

Let us first prove the following lemma.
\begin{lemma}\label{MarkovShiftsProperty}
  Suppose that the assumptions of Lemma~\ref{MarkovityNecessaryConditionTh} hold and that $P_i, P_j$ are sets from ${\mathcal P}$ with $m(P_i \cap \sigma^{-1}(P_j)) \neq 0$, 
  $P_i \subseteq C_a$, and $P_j \subseteq C_b$, where $C_a$ and $C_b$ are cylinders.
  Then for any $P \subseteq P_j$, $P \in \mathbb{B}_\Pi(A^\mathbb{N})$ it holds
   \begin {equation}\label{IntersectionDecomposeEq}
	m(P_i \cap \sigma^{-1}(P)) = \frac{m(C_a \cap \sigma^{-1}(C_b))}{m(C_b)} m(P).
   \end {equation}
\end{lemma}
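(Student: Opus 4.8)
The plan is to prove the identity \eqref{IntersectionDecomposeEq} by first establishing it for a dense collection of sets $P \subseteq P_j$ — namely finite unions of cylinders contained in $P_j$ — and then extending to arbitrary measurable $P$ by a standard approximation argument. The key structural fact I would exploit is the Markov property of the cylinder partition, which the excerpt tells us holds by the definition of Markov shifts. Specifically, since $P_i \subseteq C_a$ and $P_j \subseteq C_b$, any point $s \in P_i \cap \sigma^{-1}(P)$ must have $s_0 = a$, while $\sigma s \in P \subseteq P_j \subseteq C_b$ forces $s_1 = b$. The conditional structure of the Markov measure then says that, conditioned on starting with the letter $a$, the future evolution after the first shift depends only on the letter $b$ and not on $a$ itself.

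\textbf{Main computation on cylinders.} First I would treat the case where $P$ itself is a cylinder $C_{b\,c_1\,c_2\,\ldots\,c_{r-1}}$ (necessarily beginning with $b$, since $P \subseteq C_b$). Writing everything in terms of the Markov measure formula $m(C_{a_0 \ldots a_{n-1}}) = p_{a_0} q_{a_0 a_1} \cdots q_{a_{n-2} a_{n-1}}$, the set $P_i \cap \sigma^{-1}(P)$ decomposes: by property $(i)$ of Lemma~\ref{MarkovityNecessaryConditionTh}, $P_i \subseteq C_a$, and the event $\sigma^{-1}(P)$ prescribes letters $s_1 = b, s_2 = c_1, \ldots$. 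I would compute $m(P_i \cap \sigma^{-1}(P))$ by summing the measures of the constituent cylinders of $P_i$ (each of the form $C_{a\,\ldots}$) intersected with the prescribed future. Because the transition from the first coordinate onward factors as $q_{a b} \cdot (\text{transitions within } P)$, the ratio against $m(P_i)$ collapses to $q_{ab}$ times the measure of $P$ relative to its conditional weight. Matching this against $\frac{m(C_a \cap \sigma^{-1}(C_b))}{m(C_b)} = \frac{p_a q_{ab}}{p_b} = q_{ab}/(p_b/p_a)$ (using $p_a q_{ab}$ for the numerator) yields the claimed identity on single cylinders.

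\textbf{Extension to general measurable sets.} Having verified \eqref{IntersectionDecomposeEq} for cylinders $P \subseteq P_j$, I would extend it to finite disjoint unions of such cylinders by additivity of $m$ on both sides (the left side is additive in $P$ since $\sigma^{-1}$ and intersection distribute over disjoint unions, and the right side is manifestly linear in $m(P)$). Then I would invoke that the cylinders contained in $C_b$ generate the trace sigma-algebra on $C_b$, so that finite unions of cylinders form an algebra generating $\mathbb{B}_\Pi(A^\mathbb{N}) \cap C_b$. Both sides of \eqref{IntersectionDecomposeEq}, viewed as functions of $P \in \mathbb{B}_\Pi(A^\mathbb{N})$ with $P \subseteq P_j$, are finite measures (the left side $P \mapsto m(P_i \cap \sigma^{-1}(P))$ is a measure because $\sigma^{-1}$ is measurable and preimages respect countable unions; the right side is a constant multiple of $m$). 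Since two finite measures agreeing on a generating algebra agree everywhere, a monotone class / $\pi$-$\lambda$ argument finishes the extension.

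\textbf{The main obstacle.} The step I expect to require the most care is the cylinder-level computation: I must handle the fact that $P_i$ need not be a single cylinder but only a measurable \emph{subset} of $C_a$, so the factorization must be argued at the level of the conditional measure rather than by naive cancellation of cylinder-weight formulas. The clean way around this is to observe that the map $P \mapsto m(P_i \cap \sigma^{-1}(P))/m(P_i)$ and the map $P \mapsto m(P)/(m(C_b) \cdot m(C_a)/m(C_a \cap \sigma^{-1}(C_b)))^{-1}$ are both probability-like measures on $C_b$, and that the Markov property of the cylinder partition forces the conditional law of the future given $\{s_0 = a, s_1 = b\}$ to coincide with the conditional law given $\{s_0 = b\}$ shifted appropriately — this is exactly where the two-symbol structure and the definition of the Markov measure interlock. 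Once this conditional independence is isolated, the identity \eqref{IntersectionDecomposeEq} follows without further case analysis.
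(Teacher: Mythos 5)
Your scaffolding (verify \eqref{IntersectionDecomposeEq} on cylinders, then extend by additivity and a monotone-class argument) is reasonable in itself, but the proposal omits the one ingredient that makes the lemma true: assumption \emph{(ii)} of Lemma~\ref{MarkovityNecessaryConditionTh}, which you never invoke. Observe that the right-hand side of \eqref{IntersectionDecomposeEq} depends on $P_i$ only through the cylinder $C_a$ containing it, while for an arbitrary measurable $P_i \subseteq C_a$ the left-hand side genuinely depends on the finer structure of $P_i$; no computation with the Markov measure alone can bridge this. Concretely, over $A=\{0,1\}$ take $P_j = C_1$, $b=1$, $a=0$ and $P_i = C_{011} \subseteq C_0$: then $m(P_i \cap \sigma^{-1}(P_j)) = m(C_{011}) = p_0 q_{01} q_{11} > 0$, whereas \eqref{IntersectionDecomposeEq} with $P = P_j$ would give $\frac{m(C_0\cap\sigma^{-1}(C_1))}{m(C_1)}\,m(C_1) = p_0 q_{01}$; these differ whenever $q_{11} \neq 1$. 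What excludes such $P_i$ is precisely hypothesis \emph{(ii)}: since $m(P_i \cap \sigma^{-1}(P_j)) \neq 0$, that hypothesis forces $P_i \cap \sigma^{-1}(P_j) = C_a \cap \sigma^{-1}(P_j)$, and hence $P_i \cap \sigma^{-1}(P) = C_a \cap \sigma^{-1}(P)$ for every $P \subseteq P_j$. This replacement of $P_i$ by the full cylinder $C_a$ is the first and decisive step of the paper's proof; once it is made, one splits $P$ according to its second letter, $P = \bigcup_{i\in I}\bigl(C_b \cap \sigma^{-1}(B_i)\bigr)$, and the Markov property of $m$ yields \eqref{IntersectionDecomposeEq} directly, with no need for the approximation machinery.

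Your paragraph on ``the main obstacle'' identifies the right symptom (that $P_i$ is only a measurable subset of $C_a$) but proposes the wrong cure: the coincidence of the conditional law of the future given $\{s_0=a,\,s_1=b\}$ with that given $\{s_0=b\}$ is a property of the measure $m$ and holds regardless of what $P_i$ is, so it cannot convert $m(P_i\cap\sigma^{-1}(P))$ into $m(C_a\cap\sigma^{-1}(P))$. The normalizations there are also off: \eqref{IntersectionDecomposeEq} contains no factor $m(P_i)$, and your second ``probability-like measure'' does not reduce to $\frac{m(C_a\cap\sigma^{-1}(C_b))}{m(C_b)}\,m(P)$. Reinstate assumption \emph{(ii)} as the opening step and the remaining cylinder computation becomes both correct and largely unnecessary.
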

\begin{proof}
  Given $m(P) = 0$, equality \eqref{IntersectionDecomposeEq} holds automatically, thus assume that $m(P) > 0$. 
  As a consequence of assumption $(ii)$ of Lemma~\ref{MarkovityNecessaryConditionTh}, for any $P \subseteq P_j$ we have
   \begin {align*}
	P_i \cap \sigma^{-1}(P) = P_i \cap \mleft(\sigma^{-1}(P_j) \cap \sigma^{-1}(P)\mright) &= \mleft( C_a \cap \sigma^{-1}(P_j)\mright) \cap \sigma^{-1}(P)\\
											       &= C_a \cap \sigma^{-1}(P).
   \end {align*}
   Since $P \subseteq P_j \subseteq C_b$, for all $s \in P$ the first symbol is fixed.
   One can also decompose the set $P$ into union of sets with two fixed elements:
   \begin {equation*}
	P = \bigcup_{i \in I} \mleft( C_b \cap \sigma^{-1}(B_i) \mright),
   \end {equation*}
   where it holds $B_i \subseteq C_i$ for all $i \in I \subseteq A$.
   Then it follows
   \begin {equation*}
	m(P_i \cap \sigma^{-1}(P)) = m(C_a \cap \sigma^{-1}(P)) = m(C_a \cap \sigma^{-1}(C_b) \cap \sigma^{-2}(\bigcup_{i \in I}B_i)).
   \end {equation*}
   Finally, since $m$ is a Markov measure, we get
   \begin {align*}
	m(C_a \cap \sigma^{-1}(C_b) \cap \sigma^{-2}(\bigcup_{i \in I}B_i)) &= \frac{m(C_a \cap \sigma^{-1}(C_b))}{m(C_b)} m(C_b \cap \sigma^{-1}(\bigcup_{i \in I}B_i))\\
									    &= \frac{m(C_a \cap \sigma^{-1}(C_b))}{m(C_b)} m(P).
   \end {align*}
   This completes the proof. 
\end{proof}

Now we come to the proof of Lemma~\ref{MarkovityNecessaryConditionTh}.
\begin{proof}
  By assumption $(i)$, the partition ${\mathcal P}$ is finer than the generating partition ${\mathcal C}$ except for an invariant set of measure zero, hence ${\mathcal P}$ is generating as well.
  To show that ${\mathcal P}$ has the Markov property let us fix some $n \in \mathbb{N}$ and consider $P_{i_0},P_{i_1}, \ldots, P_{i_n} \in {\mathcal P}$ with
   \begin {equation*}
      m \mleft(P_{i_{0}} \cap \sigma^{-1}(P_{i_{1}}) \cap \ldots \cap \sigma^{-\circ n}(P_{i_{n-1}})\mright) > 0. 
   \end {equation*}
   We need to show that the following equality holds:
	\begin {equation*}
	      \frac{m\mleft(P_{i_{0}} \cap \sigma^{-1}(P_{i_{1}}) \cap \ldots \cap \sigma^{-\circ n}(P_{i_{n}})\mright)}
	           {m\mleft(P_{i_{0}} \cap \sigma^{-1}(P_{i_{1}}) \cap \ldots \cap \sigma^{-\circ(n-1)}(P_{i_{n-1}})\mright)} =
	      \frac{m\mleft(P_{i_{n-1}} \cap \sigma^{-1}(P_{i_{n}})\mright)}{m(P_{i_{n-1}})}.
	\end {equation*}
   According to assumption $(i)$, there exist $a_0, a_1, \ldots, a_n \in A$ with $P_{i_k} \subset C_{a_k}$ for all $k = 0,1, \ldots, n$.
   Therefore by successive application of \eqref{IntersectionDecomposeEq} we have:
    \begin {align*}
	 m &   \mleft(P_{i_{0}} \cap \sigma^{-1}(P_{i_{1}}) \cap \ldots \cap \sigma^{-\circ n}(P_{i_{n}})\mright) = \\ 
	   &= m\mleft(P_{i_{1}} \cap \sigma^{-1}(P_{i_{2}}) \cap \ldots \cap \sigma^{-\circ (n-1)}(P_{i_{n}})\mright)\frac{m(C_{a_0} \cap \sigma^{-1}(C_{a_1}))}{m(C_{a_1})} = \ldots\\
	   &= m(P_{i_{n-1}} \cap \sigma^{-1}(P_{i_{n}})) \prod_{k=0}^{n-2} \frac{m\mleft(C_{a_{k}} \cap \sigma^{-1}(C_{a_{k+1}})\mright)}{m(C_{a_{k+1}})}.
    \end {align*}
   Analogously:
    \begin {align*}
	 m &	\mleft(P_{i_{0}} \cap \sigma^{-1}(P_{i_{1}}) \cap \ldots \cap \sigma^{-\circ (n-1)}(P_{i_{n-1}})\mright) = \\ 
	   &= m \mleft(P_{i_{1}} \cap \sigma^{-1}(P_{i_{2}}) \cap \ldots \cap \sigma^{-\circ (n-2)}(P_{i_{n-1}})\mright)\frac{m(C_{a_0} \cap \sigma^{-1}(C_{a_1}))}{m(C_{a_1})} = \ldots\\
	   &= m(P_{i_{n-2}} \cap \sigma^{-1}(P_{i_{n-1}})) \prod_{k=0}^{n-3} \frac{m\mleft(C_{a_{k}} \cap \sigma^{-1}(C_{a_{k+1}})\mright)}{m(C_{a_{k+1}})}\\
			  &= m(P_{i_{n-1}}) \prod_{k=0}^{n-2} \frac{m\mleft(C_{a_{k}} \cap \sigma^{-1}(C_{a_{k+1}})\mright)}{m(C_{a_{k+1}})},
    \end {align*}
    and we are done. 
\end{proof}

\begin{corollary}\label{MarkovityNecessaryConditionCor}
  Let ${\mathcal P} = \{P_0, P_1, \ldots, P_n\}$ and $\widetilde{{\mathcal P}} = \{P \setminus O \mid P \in {\mathcal P}\} \cup \{ O \}$, where $m(O) = 0$ and $\sigma^{-1}(O) = O$,
  be partitions of $A^\mathbb{N}$.
  If $\widetilde{{\mathcal P}}$ satisfies the assumptions of Lemma~\ref{MarkovityNecessaryConditionTh}, then ${\mathcal P}$ is generating and has the Markov property.
\end{corollary}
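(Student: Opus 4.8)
The plan is to transfer the two properties from $\widetilde{{\mathcal P}}$---which is generating and has the Markov property by Lemma~\ref{MarkovityNecessaryConditionTh}---to ${\mathcal P}$, exploiting that the two partitions differ only on the invariant null set $O$. Writing $\widetilde{P}_i := P_i \setminus O$, the elementary facts I would record first are that $\sigma^{-1}(O) = O$ forces $\sigma^{-\circ n}(O) = O$ for all $n$, whence $\sigma^{-\circ n}(\widetilde{P}_i) = \sigma^{-\circ n}(P_i) \setminus O$, and that every set manipulation below alters the sets involved only within $O$, which is $m$-null. Since both the generating property and the Markov property are defined modulo null sets, this should suffice.

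For the generating property, let $\mathbb{A}_{{\mathcal P}}$ and $\mathbb{A}_{\widetilde{{\mathcal P}}}$ denote the $\sigma$-algebras generated by the $\sigma$-preimages of the elements of ${\mathcal P}$ and $\widetilde{{\mathcal P}}$, respectively. Each generator $\sigma^{-\circ n}(\widetilde{P}_i) = \sigma^{-\circ n}(P_i) \setminus O$ of $\mathbb{A}_{\widetilde{{\mathcal P}}}$ agrees up to the null set $O$ with the generator $\sigma^{-\circ n}(P_i)$ of $\mathbb{A}_{{\mathcal P}}$, and the extra element $O$ agrees with $\emptyset$. A \emph{good-sets} argument then shows that every member of $\mathbb{A}_{\widetilde{{\mathcal P}}}$ coincides modulo a null set with a member of $\mathbb{A}_{{\mathcal P}}$: the family of sets admitting such an approximant is closed under complementation (since $A \triangle A'$ is unchanged by complementing both sets) and under countable unions (since $(\bigcup_j A_j) \triangle (\bigcup_j A_j') \subseteq \bigcup_j (A_j \triangle A_j')$), and it contains all generators, so it is all of $\mathbb{A}_{\widetilde{{\mathcal P}}}$. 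Combining this with the triangle inequality for symmetric differences and the fact that $\widetilde{{\mathcal P}}$ is generating produces, for each $B \in \mathbb{B}_\Pi(A^\mathbb{N})$, an approximant in $\mathbb{A}_{{\mathcal P}}$; hence ${\mathcal P}$ is generating.

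For the Markov property I would verify \eqref{MarkovPartitionDefinition} for ${\mathcal P}$ by showing that each measure occurring in it is unaffected by passing from $P_i$ to $\widetilde{P}_i$. For any indices the identity $\bigcap_{j} \sigma^{-\circ j}(\widetilde{P}_{i_j}) = \bigl(\bigcap_{j} \sigma^{-\circ j}(P_{i_j})\bigr) \setminus O$ shows the two cylinders differ by a subset of $O$, so their $m$-measures coincide; likewise $m(P_{i_{n-1}}) = m(\widetilde{P}_{i_{n-1}})$. Therefore the positivity hypothesis and the ratio identity \eqref{MarkovPartitionDefinition} for $\{P_i\}$ are equivalent to those for $\{\widetilde{P}_i\}$, which hold because $\widetilde{{\mathcal P}}$ has the Markov property.

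The only step requiring genuine care is the good-sets approximation establishing that $\mathbb{A}_{{\mathcal P}}$ and $\mathbb{A}_{\widetilde{{\mathcal P}}}$ coincide modulo null sets; the Markov part reduces to a one-line measure computation. I therefore expect the write-up to be short, with attention focused on phrasing the $\sigma$-algebra approximation cleanly.
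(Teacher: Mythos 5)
Your argument is correct and is precisely the intended one: the paper states this corollary without proof, treating it as immediate from Lemma~\ref{MarkovityNecessaryConditionTh} because ${\mathcal P}$ and $\widetilde{{\mathcal P}}$ differ only on the invariant null set $O$, and both the generating property and the Markov property are insensitive to changes on such a set. Your explicit verification --- $\sigma^{-\circ n}(\widetilde{P}_i)=\sigma^{-\circ n}(P_i)\setminus O$, the good-sets approximation of $\mathbb{A}_{\widetilde{{\mathcal P}}}$ by $\mathbb{A}_{{\mathcal P}}$ modulo null sets, and the equality of all measures appearing in \eqref{MarkovPartitionDefinition} --- fills in exactly the details the authors left implicit.
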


\subsection{Proof of Lemma~\ref{theorem2letters}}\label{sectionMarkov2symbols}
Now we show that for ergodic Markov shift over two symbols the ordinal partitions are generating and have the Markov property.
The idea of the proof is to construct for an ordinal partition ${\mathcal P}^X(d)$ a partition $\widetilde{{\mathcal P}}^X(d)$ as in Corollary~\ref{MarkovityNecessaryConditionCor}
and show that $\widetilde{{\mathcal P}}^X(d)$ satisfies the assumptions of Lemma~\ref{MarkovityNecessaryConditionTh}.
Then the partition ${\mathcal P}^X(d)$ is generating and has the Markov property by Corollary~\ref{MarkovityNecessaryConditionCor}.

The proof is divided into a sequence of three lemmas.
First, Lemma~\ref{IncreaseDecreaseForShifts} relates the partition ${\mathcal P}^X(1)$ with the cylinder partition $\mathcal{C}$.
Then we construct the partition $\widetilde{{\mathcal P}}^X(d)$ and show in Lemma~\ref{EqualValuesLemma} that it satisfies assumption $(i)$ of Lemma~\ref{MarkovityNecessaryConditionTh}.
Finally, in Lemma~\ref{MarkovPartition2letters} we prove that $\widetilde{{\mathcal P}}^X(d)$ satisfies assumption $(ii)$ of Lemma~\ref{MarkovityNecessaryConditionTh}.

Given $\overline{0} = (0, 0, \ldots, 0, \ldots), \overline{1} = (1, 1, \ldots, 1, \ldots)$, the following holds.
\begin {lemma}\label{IncreaseDecreaseForShifts}
    Let $P_{(0,1)}, P_{(1,0)} \in {\mathcal P}^X(1)$ be elements of the ordinal partition corresponding to the increasing and decreasing ordinal pattern of order $d = 1$, respectively:
    \begin {equation*}
	P_{(0,1)} = \{ s \in \{0, 1\}^\mathbb{N} \mid X(s)    < X(\sigma s)\},\,\, P_{(1,0)} = \{ s \in \{0, 1\}^\mathbb{N} \mid X(s) \geq X(\sigma s)\},
    \end {equation*}
    where $X$ is lexicographic-like.
    Then it holds
    \begin {equation*}
	  P_{(0,1)} = C_0 \setminus \{\overline{0}\}
    \end {equation*}
and
    \begin {equation*}
	  P_{(1,0)} = C_1 \cup \{\overline{0}\}.
    \end {equation*}
\end {lemma}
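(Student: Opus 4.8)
The plan is to characterize the two ordinal patterns of order $d=1$ directly in terms of the lexicographic order on $\{0,1\}^\mathbb{N}$, using the defining property of a lexicographic-like observable. Since $X$ is lexicographic-like, for any $s$ we have $X(s) \leq X(\sigma s)$ if and only if $s \preceq \sigma s$. Thus the key observation is that the sets $P_{(0,1)}$ and $P_{(1,0)}$ are determined purely by comparing a sequence $s$ with its shift $\sigma s$ in the lexicographic order, and the problem reduces to a combinatorial fact about binary sequences.

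\medskip

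First I would rewrite the two ordinal classes in lexicographic terms: $P_{(0,1)} = \{ s \mid s \prec \sigma s\}$ (using that $X$ is injective almost everywhere, and handling ties via the second clause of Definition~\ref{OrdPatternDef}), and $P_{(1,0)} = \{ s \mid \sigma s \preceq s\}$. Next I would analyze the comparison $s$ versus $\sigma s$ by cases on the first symbol $s_0$. If $s_0 = 1$, then since $(\sigma s)_0 = s_1 \leq 1 = s_0$, we immediately get $\sigma s \preceq s$, so every $s \in C_1$ lies in $P_{(1,0)}$. If $s_0 = 0$ and $s \neq \overline{0}$, then $s$ has a first index where a $1$ appears; I would compare $s = (0, s_1, s_2, \ldots)$ with $\sigma s = (s_1, s_2, \ldots)$ and show that the leading $0$ of $s$ forces $s \prec \sigma s$ unless $s = \overline{0}$. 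The degenerate point $\overline{0}$ satisfies $\sigma \overline{0} = \overline{0}$, so $X(\overline{0}) = X(\sigma \overline{0})$, placing it in $P_{(1,0)}$ by the convention $X(s) \geq X(\sigma s)$; this is the source of the $\{\overline{0}\}$ corrections in the statement.

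\medskip

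The claim $P_{(0,1)} = C_0 \setminus \{\overline{0}\}$ then follows by collecting these cases: the increasing pattern occurs exactly for sequences starting with $0$ that are not identically $\overline{0}$. Correspondingly $P_{(1,0)}$, being the complement, equals $C_1 \cup \{\overline{0}\}$. The one subtlety worth care is the comparison when $s_0 = 0$ but the continuation is itself small; I would verify that for $s = (0, s_1, s_2, \ldots) \neq \overline{0}$ with first nonzero entry at position $j \geq 1$, the sequences $s$ and $\sigma s$ first differ at a position where $s$ has the smaller symbol, giving $s \prec \sigma s$. This requires comparing how the leading zero shifts the position of the first $1$, and it is the step where one must argue carefully rather than by inspection.

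\medskip

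\textbf{The main obstacle} I anticipate is making the lexicographic comparison of $s$ with $\sigma s$ fully rigorous over the two-symbol alphabet: one must confirm that deleting the leading symbol genuinely produces a strictly larger sequence for every $s \in C_0 \setminus \{\overline{0}\}$, including sequences with long runs of zeros, and that no $s \in C_1$ can accidentally satisfy $s \prec \sigma s$. Once this case analysis is complete, the identification of $P_{(0,1)}$ and $P_{(1,0)}$ with the cylinders (up to the single point $\overline{0}$) is immediate.
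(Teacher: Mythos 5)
Your proposal is correct and follows essentially the same route as the paper: translate the ordinal conditions into lexicographic comparisons of $s$ with $\sigma s$ via the lexicographic-like property, then locate the first index where $s$ and $\sigma s$ differ (the first $1$ in $C_0$, the first $0$ in $C_1$), with the fixed point $\overline{0}$ landing in $P_{(1,0)}$ by the tie-breaking rule. The only spot to tighten is the $s_0=1$ case, where $(\sigma s)_0 \leq s_0$ alone does not give $\sigma s \preceq s$ lexicographically; you need the same first-differing-index argument there (or to treat $\overline{1}$ separately), which you already flag as the step requiring care.
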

\begin {proof}
  We show first that for all $s \in C_0 \setminus \{\overline{0}\}$ it holds $X(s) < X(\sigma s)$.
  Indeed, assume $s = (s_0, s_1, \ldots) \in C_0 \setminus \{\overline{0}\}$.
  Then for the smallest $k \in \mathbb{N}$ with $s_k = 1$ it holds $s_j = (\sigma s)_j$ for $j = 0,\ldots, k-1$ and $s_{k-1} < (\sigma s)_{k-1} = s_k$, that is $s \prec \sigma s$.
  Since $X$ is lexicographic-like, this implies $X(s) < X(\sigma s)$.

  By the same reason for all $s \in C_1 \setminus \{\overline{1}\}$ it holds $X(s) > X(\sigma s)$.
  Finally, as one can easily see, $s \in \{\overline{0}\} \cup \{\overline{1}\}$ implies $X(s) = X(\sigma s)$. 
  According to Definition~\ref{OrdPatternDef} of an ordinal pattern, in this case $s \in P_{(1,0)}$ and we are done.
\end {proof}

In order to apply Corollary~\ref{MarkovityNecessaryConditionCor}, consider the set
   \begin {equation*}
	  O =\bigcup_{n=0}^\infty \sigma^{-\circ n} (\{\overline{0}\}).
   \end {equation*}
By Definition~\ref{MarkovShift_def} of a Markov shift $m(C_0), m(C_1) > 0$, hence no fixed point has full measure.
Together with the assumption of ergodicity of the shift, this implies that the measure of a fixed point is zero, thus $m(O) = 0$.
As is easy to check, $\sigma^{-1}(O) = O$.
Therefore, to prove that the partition ${\mathcal P}^X(d)$ is generating and has the Markov property, it is sufficient to show that the partition 
  \begin {equation}\label{modifiedOrdPartition}
     \widetilde{{\mathcal P}}^X(d) = \{P \setminus O \mid P \in {\mathcal P}^X(d)\} \cup \{ O \}
  \end {equation}
satisfies the assumptions of Lemma~\ref{MarkovityNecessaryConditionTh}.

\begin {lemma}\label{EqualValuesLemma}
 Let $d \in \mathbb{N}$ and $\widetilde{{\mathcal P}}^X(d)$ be the partition defined by \eqref{modifiedOrdPartition} for an ergodic Markov shift over two symbols.
 For every $P \in \widetilde{{\mathcal P}}^X(d) \setminus \{ O \}$ it holds
   \begin {equation*}
    P \subset C_{a_0 a_1 \ldots a_{d-1}},
   \end {equation*}
where $C_{a_0 a_1 \ldots a_{d-1}}$ is a cylinder set.
\end{lemma}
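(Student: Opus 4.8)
The plan is to show that, after deleting the measure-zero invariant set $O$, the ordinal pattern of a point $s$ determines its first $d$ symbols $s_0, s_1, \ldots, s_{d-1}$; since these symbols are then constant across a fixed cell $P$, they furnish the cylinder $C_{a_0 a_1 \ldots a_{d-1}}$ containing $P$. Concretely, I would fix $P = P_\pi \setminus O$ for some $\pi \in \Pi_d$ and establish, for every $s \in P$ and every $j \in \{0, 1, \ldots, d-1\}$, an equivalence between a combinatorial feature of $\pi$ and the value of $s_j$.

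The argument rests on two ingredients. First, the ordinal pattern $\pi$ of the vector $(X(\sigma^{\circ d} s), X(\sigma^{\circ (d-1)} s), \ldots, X(s))$ records, purely combinatorially, the relative order of any two of its entries; in particular, by Definition~\ref{OrdPatternDef} (including its tie-breaking rule) it tells us whether $X(\sigma^{\circ j} s) < X(\sigma^{\circ (j+1)} s)$ or $X(\sigma^{\circ j} s) \geq X(\sigma^{\circ (j+1)} s)$, since these are the two adjacent-in-time entries at consecutive positions of the vector. Second, I would apply Lemma~\ref{IncreaseDecreaseForShifts} to the shifted point $\sigma^{\circ j} s$ in place of $s$: the comparison of $X(\sigma^{\circ j} s)$ with $X(\sigma^{\circ (j+1)} s) = X(\sigma(\sigma^{\circ j} s))$ is governed by the leading symbol $s_j$, with $\sigma^{\circ j} s \in C_0 \setminus \{\overline{0}\}$ forcing the strict inequality $X(\sigma^{\circ j} s) < X(\sigma^{\circ (j+1)} s)$ and $\sigma^{\circ j} s \in C_1 \cup \{\overline{0}\}$ forcing $X(\sigma^{\circ j} s) \geq X(\sigma^{\circ (j+1)} s)$. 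Because $s \in P$ means $s \notin O = \bigcup_{n \geq 0} \sigma^{-\circ n}(\{\overline{0}\})$, no iterate $\sigma^{\circ j} s$ equals $\overline{0}$, so the first alternative is equivalent to $s_j = 0$ and the second to $s_j = 1$. Combining the two ingredients, $\pi$ determines $s_j$ for each $j = 0, \ldots, d-1$, these values agree for all $s \in P$, and hence $P \subset C_{a_0 a_1 \ldots a_{d-1}}$ with each $a_j$ read off from $\pi$.

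The step requiring care is the tie case $\sigma^{\circ j} s = \overline{1}$, which deleting $O$ does not exclude: here $X(\sigma^{\circ j} s) = X(\sigma^{\circ (j+1)} s)$ even though $s_j = 1$. I must verify that the tie-breaking convention of Definition~\ref{OrdPatternDef} (equal values ordered with the larger index first) places the tied pair into the same combinatorial relation as a strict decrease, so that ties are grouped with $s_j = 1$ exactly as Lemma~\ref{IncreaseDecreaseForShifts} assigns $\overline{1}$ to $P_{(1,0)}$. Checking this bookkeeping is the only delicate point; everything else is a direct reading of the pattern.
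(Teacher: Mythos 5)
Your proof is correct and takes essentially the same route as the paper's: the paper invokes Lemma~\ref{IncreaseDecreaseForShifts} to identify ${\mathcal P}^X(1)$ with $\{C_0,C_1\}$ modulo $\overline{0}$ and then notes that $\widetilde{{\mathcal P}}^X(d)\setminus\{O\}$ refines ${\mathcal P}^X(1)_d$, which coincides with ${\mathcal C}_d$ up to a subset of $O$ --- exactly your coordinate-by-coordinate extraction of $s_j$ from the adjacent comparisons encoded in $\pi$. The tie case $\sigma^{\circ j}s=\overline{1}$ that you flag is already absorbed into the statement $P_{(1,0)}=C_1\cup\{\overline{0}\}$ of Lemma~\ref{IncreaseDecreaseForShifts}, so no extra bookkeeping is required.
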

\begin {proof}
Consider the partition consisting of cylinder sets:
\begin {equation*}
    {\mathcal C}_{d} = \{C_{a_0 a_1 \ldots a_{d-1}} \mid a_0, a_1, \ldots, a_{d-1} \in \{0, 1\}\},
\end {equation*}
for $d \in \mathbb{N}$.
According to Lemma~\ref{IncreaseDecreaseForShifts}, ${\mathcal P}^X(1)$ coincides with the partition ${\mathcal C} = \{C_0, C_1\}$ except for the only point $\overline{0}$,
consequently for all $d \in \mathbb{N}$, partition ${\mathcal P}^X(1)_d$ coincides with ${\mathcal C}_d$ except for the points from the set $\sigma^{-\circ (d - 1)}( \{\overline{0}\}) \subset O$.
Since $\widetilde{{\mathcal P}}^X(d) \setminus \{O\}$ is finer than ${\mathcal P}^X(1)_d$, we are done.
\end {proof}

\begin {lemma}\label{MarkovPartition2letters}
 Let $d \in \mathbb{N}$ and $\widetilde{{\mathcal P}}^X(d)$ be the partition defined by \eqref{modifiedOrdPartition} for an ergodic Markov shift over two symbols.
 Given $P_i, P_j\in \widetilde{{\mathcal P}}^X(d)$ with $P_i \subset C_{a_0}$ for $a_0 \in \{0, 1\}$, it holds either
\begin {equation*}
      P_i \cap \sigma^{-1}(P_j) = C_{a_0} \cap \sigma^{-1}(P_j)
\end {equation*}
or
\begin {equation*}
      m(P_i \cap \sigma^{-1}(P_j)) = 0.
\end {equation*}
\end{lemma}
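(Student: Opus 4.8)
The plan is to show that the leading symbol $a_0$ of $s$ together with the ordinal pattern $\xi$ of the window $(X(\sigma^{\circ(d+1)}s),\ldots,X(\sigma s))$ completely determines the ordinal pattern $\pi^\ast$ of the window $(X(\sigma^{\circ d}s),\ldots,X(s))$; equivalently, that the whole set $C_{a_0}\cap\sigma^{-1}(P_j)$ is contained in a single ordinal cell $P_{\pi^\ast}$. Once this is established the dichotomy is immediate. Denote by $\pi$ and $\xi$ the ordinal patterns defining $P_i$ and $P_j$, so that $P_i=\widetilde P_\pi$ and $\sigma^{-1}(P_j)$ is the set on which $(X(\sigma^{\circ(d+1)}s),\ldots,X(\sigma s))$ has pattern $\xi$. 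If $\pi^\ast=\pi$, then every $s\in C_{a_0}\cap\sigma^{-1}(P_j)$ has $s$-window pattern $\pi$ and lies outside $O$, hence $s\in\widetilde P_\pi=P_i$; together with the trivial inclusion $P_i\cap\sigma^{-1}(P_j)\subseteq C_{a_0}\cap\sigma^{-1}(P_j)$ this gives the first alternative $P_i\cap\sigma^{-1}(P_j)=C_{a_0}\cap\sigma^{-1}(P_j)$. If $\pi^\ast\ne\pi$, then no $s$ in $C_{a_0}\cap\sigma^{-1}(P_j)$ belongs to $P_i$, so $P_i\cap\sigma^{-1}(P_j)=\emptyset$ and the second alternative holds.

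First I would dispose of degenerate cases and reduce to points outside $O$. If $P_j=O$ then $\sigma^{-1}(P_j)=O$ has measure zero; assume $P_j\ne O$, so that $P_j\subseteq A^\mathbb{N}\setminus O$, and since $\sigma^{-1}(O)=O$ every $s\in\sigma^{-1}(P_j)$ satisfies $s\notin O$. By Lemma~\ref{EqualValuesLemma} both cells lie in cylinders, and since $X$ is lexicographic-like the pattern of any window $(X(\sigma^{\circ d}s),\ldots,X(s))$ is exactly the ranking of $s,\sigma s,\ldots,\sigma^{\circ d}s$ in the lexicographic order $\preceq$.

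The key step is a one-step comparison rule. For $1\le m\le d$ and $s\notin O$, comparing $s=(s_0,s_1,\ldots)$ with $\sigma^{\circ m}s=(s_m,s_{m+1},\ldots)$ lexicographically, either the leading symbols differ, in which case $s\prec\sigma^{\circ m}s$ iff $s_0<s_m$, or they agree, $s_0=s_m$, in which case the comparison passes to the tails and $s\prec\sigma^{\circ m}s$ iff $\sigma s\prec\sigma^{\circ(m+1)}s$. For two symbols this is decisive: we know $s_0=a_0$, and the leading symbol $s_m$ of $\sigma^{\circ m}s$ is read off from $\xi$ via Lemma~\ref{IncreaseDecreaseForShifts}, since the sign of the comparison between $\sigma^{\circ m}s$ and $\sigma^{\circ(m+1)}s$ (both ranked by $\xi$ for $1\le m\le d$) tells whether $\sigma^{\circ m}s$ starts with $0$ or $1$. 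When $a_0\ne s_m$ the comparison of $s$ with $\sigma^{\circ m}s$ is thereby settled, and when $a_0=s_m$ it reduces to the comparison of $\sigma s$ with $\sigma^{\circ(m+1)}s$, which $\xi$ records directly. Hence the position of $s$ among $\sigma s,\ldots,\sigma^{\circ d}s$, and thus the full pattern $\pi^\ast$, is a function of $a_0$ and $\xi$ alone, as required.

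The step I expect to be most delicate is the bookkeeping of ties: for (eventually) periodic $s$ one may have $\sigma^{\circ m}s=\sigma^{\circ n}s$ as sequences, so the reduction must be reconciled with the tie-breaking convention of Definition~\ref{OrdPatternDef}. Removing the backward orbit $O$ of the fixed point $\overline 0$ already disposes of the comparisons involving $\overline 0$; for the remaining coincidences I would check that the index-based tie-break is consistent between the two windows (the equal entries occupy positions $d$ and $d-m$ in the $s$-window and in the $\sigma s$-window alike), so that the equivalence $s\prec\sigma^{\circ m}s\iff\sigma s\prec\sigma^{\circ(m+1)}s$ survives at ties and the determination of $\pi^\ast$ is genuinely pointwise rather than merely almost everywhere. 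It is precisely this pointwise determination that yields the exact set equality in the first alternative, rather than equality only up to a null set.
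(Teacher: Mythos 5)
Your proposal is correct and follows essentially the same route as the paper: reduce to showing that $C_{a_0}\cap\sigma^{-1}(P_j)$ lies in a single ordinal cell after discarding $O$, then determine the relation between $X(s)$ and $X(\sigma^{\circ k}s)$ from $s_0=a_0$, the symbol $s_k$ fixed by the cylinder containing $P_j$ (Lemma~\ref{EqualValuesLemma}), and the tail comparison already encoded in the pattern of $P_j$. Your explicit treatment of the tie-breaking convention is a welcome refinement of a point the paper handles only implicitly via its use of $\geq$.
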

\begin{proof}

Fix some $d\in {\mathbb N}$ and let $P_i, P_j\in \widetilde{{\mathcal P}}^X(d)$.
If $P_i = O$ or $P_j = O$, then it follows immediately that $m(P_i \cap \sigma^{-1}(P_j)) = 0$; thus we put $P_i \neq O$, $P_j \neq O$.
Further, let us define the set $P$ as follows
\begin {equation*}
     P = C_{a_0} \cap \sigma^{-1}(P_j) = \{s = (s_0, s_1, \ldots) \mid s_0 = a_0, (s_1, s_2, \ldots) \in P_j\}.
\end {equation*}
It is sufficient to prove that it holds either $P \subset P_i$ or $P \cap P_i = \emptyset$. 
To do this we show that the ordering of $\mleft(X(s), X(\sigma s), \ldots,  X(\sigma^{\circ d}s)\mright)$ is the same for all $s \in P$.
Since $(s_1, s_2, \ldots) \in P_j$, the ordering of $\mleft(X(\sigma s), X(\sigma^{\circ 2}s),\ldots,  X(\sigma^{\circ d}s)\mright)$ is the same for all $s \in P$.
It remains to show that the relation between $X(s)$ and $X(\sigma^{\circ (k)}s)$ for $k = 1, 2, \ldots, d$ is the same for all $s \in P$.

Note that the order relations between $X(\sigma s)$ and $X(\sigma^{\circ (k+1)}s)$ for $k = 1, 2, \ldots, d$ is given by the fact that $\sigma s = (s_1, s_2, \ldots) \in P_j$.
Next, by Lemma~\ref{EqualValuesLemma} for every $P_j$ there exists a cylinder set $C_{a_1 a_2 \ldots a_{d}}$, such that if $(s_1, s_2, \ldots) \in P_j$ then $s_k = a_k$ for $k = 1, 2, \ldots, d$.
Now it remains to consider two cases:

Assume first that $s_0 = a_0 = 0$ and consider $k = 1, 2, \ldots, d$.
If $s_k = 1$ then $X(s) < X(\sigma^{\circ k}s)$.
Further, if $s_k = 0$, then $X(\sigma s) < X(\sigma^{\circ (k+1)}s)$ implies $X(s) < X(\sigma^{\circ k}s)$, and $X(\sigma s) \geq X(\sigma^{\circ (k+1)}s)$ implies $X(s) \geq X(\sigma^{\circ k}s)$.

Analogously, assume that $s_0 = a_0 = 1$ and consider $k = 1, 2, \ldots, d$.
If $s_k = 0$ then $X(s) \geq X(\sigma^{\circ k}s)$.
If $s_k = 1$, then $X(\sigma s) < X(\sigma^{\circ (k+1)}s)$ implies $X(s) < X(\sigma^{\circ k}s)$, and $X(\sigma s) \geq X(\sigma^{\circ (k+1)}s)$ implies $X(s) \geq X(\sigma^{\circ k}s)$.

Therefore all $s \in P$ are in the same set of the ordinal partition, and consequently it holds either $P \subset P_i$ or $P \cap P_i = \emptyset$.
This finishes the proof.
\end{proof}

\section*{Acknowledgment}
This work was supported by the Graduate School for Computing in Medicine and Life Sciences
funded by Germany's Excellence Initiative [DFG GSC 235/1].


\begin{thebibliography}{99}
\bibitem{Choe2005}
G.H.~Choe,
 Computational ergodic theory,
 Springer, Berlin, 2005.

\bibitem{Walters1982}
P.~Walters,
 An introduction to ergodic theory,
 Springer, New York, 2000.

\bibitem{SaparinWittKurthsAnishchenko1994}
P.~Saparin, A.~Witt, J.~Kurths, V.~Anishenko,
 The renormalized entropy -- an appropriate complexity measure?,
 Chaos Soliton. Fract., 4 (1994) 1907--1916.

\bibitem{QuianQuirogaArnholdLehnertzGrassberger2000}
R.~Quian Quiroga, J.~Arnhold, K.~Lehnertz, P.~Grassberger,
 Kulback-Leibler and renormalized entropies: Applications to electroencephalograms of epilepsy patients,
 Phys. Rev. E, 62 (2000) 8380--8386.

\bibitem{KopitzkiWarnkeSaparinKurthsTimmer2002}
K.~Kopitzki, P.C.~Warnke, P.~Saparin, J.~Kurths, J.~Timmer,
 Comment on ``Kullback-Leibler and renormalized entropies: Applications to electroencephalograms of epilepsy patients'',
 Phys. Rev. E, 66 (2002) 043902.

\bibitem{QuianQuirogaArnholdLehnertzGrassberger2002}
R.~Quian Quiroga, J.~Arnhold, K.~Lehnertz, P.~Grassberger,
 Reply to ``Comment on `Kullback-Leibler and renormalized entropies: Applications to electroencephalograms of epilepsy patients' '',
 Phys. Rev. E, 66 (2002) 043903.

\bibitem{RichmanMoorman2000}
J.S.~Richman, J.R.~Moorman,
  Physiological time-series analysis using approximate entropy and sample entropy,
  Am. J. Physiol.--Heart C., 278 (2000) H2039--H2049.

\bibitem{BandtPompe2002}
C.~Bandt, B.~Pompe,
 Permutation entropy: A natural complexity measure for time series,
 Phys. Rev. Lett., 88 (2002) 174102.

\bibitem{Amigo2010}
J.M.~Amig\'o,
 Permutation complexity in dynamical systems. Ordinal patterns, permutation entropy and all that,
 Springer, Berlin Heidelberg, 2010.

\bibitem{KellerSinnEmonds2007}
K.~Keller, J.~Emonds, M.~Sinn,
 Time series from the ordinal viewpoint,
 Stoch. Dynam., 2 (2007) 247--272.

\bibitem{BandtKellerPompe2002}
C.~Bandt, G.~Keller, B.~Pompe,
 Entropy of interval maps via permutations,
 Nonlinearity, 15 (2002) 1595--1602.

\bibitem{Keller2012}
K.~Keller,
 Permutations and the Kolmogorov-Sinai entropy,
 Discrete Contin. Dyn. S., 32 (2012) 891--900.

\bibitem{Young2003}
L.-S.~Young,
  Entropy in dynamical systems,
  Princeton University Press, Princeton, 2003.

\bibitem{MartensNowicki2000}
M.~Martens, T.~Nowicki,
  Invariant measures for typical quadratic maps,
  Ast\'erisque, Paris, 2000.

\bibitem{Sprott2003}
J.C. Sprott,
 Chaos and time-series analysis,
 Oxford University Press, Oxford, 2003.

\bibitem{KellerUnakafovUnakafova2012}
K.~Keller, A.M. Unakafov, V.A. Unakafova, 
 On the relation of KS entropy and permutation entropy,
 Physica D, 241 (2012) 1477--1481.

\bibitem{Kitchens1998}
B.P.~Kitchens,
 Symbolic dynamics. One-sided, two-sided and countable state Markov shifts,
 Springer, Berlin, 1998.

\bibitem{KellerSinn2009}
K.~Keller, M.~Sinn,
 A standardized approach to the Kolmogorov-Sinai entropy,
 Nonlinearity, 22 (2009) 2417--2422.

\bibitem{KellerSinn2010}
K.~Keller, M.~Sinn,
 Kolmogorov-Sinai entropy from the ordinal viewpoint,
 Physica D, 239 (2010) 997--1000.

\bibitem{AntonioukKellerMaksymenko2013}
A.~Antoniouk, K.~Keller, S.~Maksymenko,
 Kolmogorov-Sinai entropy via separation properties of order-generated sigma-algebras,
 Discrete Contin. Dyn. S., 34 (2014) 1793--1809.

\bibitem{ParryWilliams1977}
W.~Parry, R.F.~Williams,
 Block coding and a zeta function for finite Markov chains,
 Proc. Lond. Math. Soc., III. Ser., 35 (1977) 483--495.

\bibitem{UnakafovaUnakafovKeller2013}
V.A. Unakafova, A.M. Unakafov, K.~Keller,
 An approach to comparing Kolmogorov-Sinai and permutation entropy,
 Eur. Phys J. Spec. Top., 222 (2013) 353--361.

\bibitem{UnakafovaKeller2013}
V.A. Unakafova, K.~Keller,
 Efficient measuring complexity on the base of real-world data, 
 Entropy, 15 (2013) 4392--4415.

\bibitem{CoverThomas2006}
T.M.~Cover, J.A.~Thomas,
  Elements of information theory. 2nd ed.,
  John Wiley \& Sons, Hoboken, 2006.

\end{thebibliography}
\end{document}